\newcommand{\X}{\mathcal{H}_q(\mathbb{F}_{q^2})}
\newtheorem{theorem}{Theorem}
\newtheorem*{theorem*}{Theorem}
\newtheorem{lemma}{Lemma}
\newtheorem{definition}{Definition}
\newtheorem{example}{Example}
\newtheorem{fact}{Fact}
\title{Lower Rate Bounds for Hermitian-Lifted Codes for Odd Prime Characteristic}
\author{Beth Malmskog}
\thanks{This work was supported by the National Science Foundation under Grant No. 2137661}
\author{Na'ama Nevo}
\thanks{}
\begin{document}

\maketitle

\begin{abstract}
Locally recoverable codes are error correcting codes with the additional property that every symbol of any codeword can be recovered from a small set of other symbols.  This property is particularly desirable in cloud storage applications.  A locally recoverable code is said to have availability $t$ if each position has $t$ disjoint recovery sets.  Hermitian-lifted codes are locally recoverable codes with high availability first described by Lopez, Malmskog, Matthews, Pi\~nero-Gonzales, and Wootters. The codes are based on the well-known Hermitian curve and incorporate the novel technique of lifting to increase the rate of the code. Lopez et al. lower bounded the rate of the codes defined over fields with characteristic 2. This paper generalizes their work to show that the rate of Hermitian-lifted codes is bounded below by a positive constant depending on $p$ when $q=p^l$ for any odd prime $p$.
\end{abstract}

\section{Introduction}

Error-correcting codes are methods to cleverly encode redundancy into information so that errors or erasures occurring during transmission or storage can be repaired. These algorithms are ubiquitous, enabling processes as varied and important as telescopes transmitting images of the universe back to earth, safe storage of photos on personal laptops, and reliable cloud storage and computing.  

This paper focuses particularly on locally recoverable codes, a class of codes motivated by cloud storage applications.   Intuitively, locally recoverable codes have the additional property that any erasure can be recovered with access to only a small number of other symbols from the encoded information, called the recovery set for the erased symbol.  In cloud storage applications, individual servers will not infrequently fail or become unavailable.  To make sure that storage is reliable, one piece of information (or codeword) may be spread across many servers so that if a single server is not available, the information of this server can be retrieved from other servers in the cluster.  This eliminates the need for simple backups of each server, which would be necessary if the entire codeword were stored on a single server.  In a situation of high demand or high number of failures, it would be nice to have multiple ways to recover the information from a single server.  A locally recoverable code is said to have availability $t$ if for each server there are $t$ mutually disjoint sets of servers that can each be used to recover the information of the original server. 

This work builds on that of Lopez, Malmskog, Matthews, Pi\~nero-Gonzales, and Wootters\cite{malmskog:article} constructing locally recoverable codes with high availability called Hermitian-lifted codes. These are evaluation codes on the affine points of the Hermitian curve $y^{q+1}=x^q+x$ defined over the finite field $\mathbb{F}_{q^2}$, where $q=p^l$ for any prime $p$ and any positive integer $l$. The rate of a code is a measure if its efficiency, the ratio of the number of symbols in the raw data to the number of symbols in the encoded information, so codes with larger rate are more efficient.  Their paper proves that when $q=2^l$, the rate of the Hermitian-lifted code is bounded below by a positive constant independent of $l$. This work uses the same methods to show that when $q=p^l$ for any prime $p$, the rate of the Hermitian-lifted code is still bounded below by a positive constant depending on $p$ but independent of $l$. Although we find a lower bound that is fairly small, this result is important because it shows that the rate of this code is significantly better than the rate of the corresponding Hermitian one-point code $\mathcal{C}$, which tends to $0$ for $q=p^l$ as $l$ increases. This is further evidence that applying the lifting techniques pioneered by Guo et al. \cite{guo2013new} to evaluation codes on curves can be effective for odd primes as well.  These ideas have promise for other curves, as illustrated by the recent work of Matthews et al. applying similar lifting techniques to Norm-Trace curves in \cite{matthews_murphy} and \cite{matthews2023curvelifted}. We note that the lower bound found here is not a tight bound. The bound calculated in the theorem is based on finding enough ``good" functions to yield a positive rate bound, but does not aim to find all the ``good" functions or the actual dimension of the code.  In fact, recent work of Allen et al. \cite{undergrad:future} (completed after this work) improves upon our bound using different techniques.  However, an important aspect of our work is that it illustrates that the techniques of \cite{malmskog:article}, which could be potentially useful for other curves, can be extended to general primes, and proves some elegant but technical number theoretic identities for all primes $p$.

Section \ref{sec:background} will cover the necessary background on codes and algebraic geometry codes, and then introduce the construction for Hermitian-lifted Codes described in \cite{malmskog:article}. In Section \ref{sec:proof}, we will prove the generalized version of the theorem proved in \cite{malmskog:article} by following a very similar proof structure to the one used to prove the $q=2^l$ case. Finally, we will show an example of the proof in the $p=3$ case and an example of a good monomial.

\section{Important Background and Notation}\label{sec:background}

In the following, we include standard definitions and facts from coding theory and algebraic geometry codes.  For more information and a full development of this background material, a good reference is Judy Walker's \textit{Codes and Curves} \cite{walker:textbook}.

\subsection{Error Correcting and Detecting Codes}

Let $p$ be an integer prime, $l$ a natural number, and $q=p^l$.  Let $\mathbb{F}_q$ denote the field with $q$ elements. For two natural numbers $n,k$ with $n\geq k$, a \textit{linear code} $C$ of \textit{length} $n$ and \textit{dimension} $k$ is a $k$-dimensional linear subspace of $(\mathbb{F}_q)^n$.  Information is encoded as vectors of length $n$ with coordinates in $\mathbb{F}_q$.  Elements of $C$ are called \textit{codewords}. The code $C$ has $q^k$ codewords, i.e. there are $q^k$ different messages or words that can be transmitted or stored using $C$.

For any two vectors in $(\mathbb{F}_q)^n$,  $\vec{x}=(x_1,x_2,\ldots, x_n)$ and $\vec{y}=(y_1,y_2,\ldots, y_n)$, the \textit{Hamming Distance} between any $\vec{x}$ and $\vec{y}$ is defined as $d(\vec{x},\vec{y})=\#\{i|x_i\neq y_i\}$ , which is the number of positions in which $\vec{x}$ and $\vec{y}$ have differing symbols. Every code has a \textit{minimum distance} usually denoted $d$, which is the smallest Hamming Distance between any two distinct codewords in the code. The \textit{minimum weight} of a code is the smallest distance between a nonzero codeword and the zero codeword.  It is not hard to see that for a linear code $C$, the minimum weight is equal to the minimum distance. The minimum distance of a code determines how many errors in a message the code can correct. Say that codeword $\vec{c}$ is transmitted and stored, but some number $e$ of positions are corrupted.  If $e<d$, the resulting message is not a codeword, because it would take at least $d$ changes to get from $\vec{c}$ to the nearest other codeword.  Thus up to $d-1$ errors can be \textit{detected}.  If $e\leq\frac{d-1}{2}$, the message will be closer to $\vec{c}$ than to any other codeword, so up to $\frac{d-1}{2}$ errors can be \textit{corrected} by replacing the corrupted message with the closest codeword.

The \textit{information rate} of a code is given by $R=k/n$. Intuitively, out of the $n$ symbols in a codeword, only $k$ symbols provide are necessary to convey the message, while the other $n-k$ symbols are transmitted to assist with error detection. Therefore, maximizing the rate of a code minimizes the number of extraneous symbols that need to be transmitted, which increases the efficiency of transmission. A rate as close as possible to $1$ is ideal.

Although maximizing both the minimum distance and the dimension of a code is ideal, when the two values are maximized there must be trade-offs between them.  This is captured in the \textit{Singleton bound}.

\begin{theorem}
    For a linear code with minimum distance $d$, dimension $k$, and length $n$, we have $d\leq n-k+1$.
\end{theorem}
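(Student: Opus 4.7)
The plan is to establish the Singleton bound via a projection-and-pigeonhole argument. The idea is to consider the $\mathbb{F}_q$-linear map $\pi\colon C \to (\mathbb{F}_q)^{n-d+1}$ that sends each codeword to its first $n-d+1$ coordinates. If one can show that $\pi$ is injective, then comparing cardinalities gives $q^k = |C| \leq q^{n-d+1}$, and taking logarithms base $q$ produces $k \leq n-d+1$, which rearranges to the desired inequality $d \leq n-k+1$.

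The heart of the argument is therefore the injectivity of $\pi$, which I would handle as follows. Suppose $\vec{c}_1, \vec{c}_2 \in C$ satisfy $\pi(\vec{c}_1) = \pi(\vec{c}_2)$. Then $\vec{c}_1$ and $\vec{c}_2$ agree on the first $n-d+1$ coordinates, so they disagree only possibly on the last $d-1$ coordinates. Hence $d(\vec{c}_1, \vec{c}_2) \leq d-1$, which is strictly less than the minimum distance $d$. By the definition of minimum distance (together with the fact, noted in the text, that for a linear code the minimum weight equals the minimum distance), this forces $\vec{c}_1 = \vec{c}_2$.

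The only subtlety is choosing the projection to drop exactly $d-1$ coordinates: dropping fewer would not give a sufficiently sharp cardinality bound, while dropping more could destroy injectivity. No real obstacle arises, as the proof uses only the definitions of minimum distance and dimension and the fact that a $k$-dimensional $\mathbb{F}_q$-subspace has $q^k$ elements, all of which were introduced above. In fact, the same argument establishes the analogous bound $|C| \leq q^{n-d+1}$ for arbitrary (not necessarily linear) codes $C \subseteq (\mathbb{F}_q)^n$.
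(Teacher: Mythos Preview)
Your proof is correct and is the standard projection/pigeonhole argument for the Singleton bound. The paper itself does not prove this theorem at all---it is stated without proof as background material from coding theory (with a reference to Walker's \emph{Codes and Curves})---so there is nothing in the paper to compare against; your argument stands on its own and would serve perfectly well as a proof if one were required.
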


A linear code with length $n$ and dimension $k$ can be represented by a $k\times n$ generator matrix, where each row of the matrix is one of the $k$ basis elements of the code's vector space.

\begin{example}
\normalfont
Consider the linear code $\mathcal{C}$ over the alphabet $A=\mathbb{F}_2=\{0,1\}$ with the following generator matrix.
$$\begin{bmatrix}
1 & 0 & 0 & 1 & 1 & 1\\
0 & 1 & 0 & 0 & 1 & 1\\
1 & 0 & 1 & 0 & 0 & 1\\
\end{bmatrix}$$

It is easy to see that the code $\mathcal{C}$ has length $n=6$ and dimension $k=3$. Note that the minimum distance of the set of basis vectors is not equal to the minimum distance of the entire code, which often makes it very difficult to calculate the minimum distance of large codes. Since this example is small and only has $8$ codewords, we can use brute force to find the minimum distance. Writing out and comparing all the elements of the code shows that the minimum distance of $\mathcal{C}$ is $3$.

Now consider that the message $(1,1,0,1,0,1)$ is received. Since this vector is not a linear combination of any of the rows of the generator matrix, then the message cannot be a codeword in $\mathcal{C}$. Therefore, an error must have occurred in the transmission.

A minimum distance of $3$ means this code can correct at most $\frac{3-1}{2}=1$ error in a codeword. Thus, in order to correct the error, we must find a codeword in $\mathcal{C}$ that has a Hamming Distance of $1$ from the received message. The codeword $(1,1,0,1,0,0)$ is generated by adding the first two rows of the matrix, and has a Hamming Distance of $1$ from the received message. Then this codeword is the closest to the message, so the code algorithm would assume that the intended message was $(1,1,0,1,0,0)$.
\end{example}
   We say that $C$  is a \textit{locally recoverable code (LRC) with locality $r$} if for each $i\in \{1,\dots,n\}$ there exists a set of $r$ indices $A_i\subseteq\{1,\dots,n\}\setminus\{i\}$ with $\#A_i = r$ so that, for any codeword $\vec{c}=(c_1,\dots , c_n) \in C$ the value $c_i$ can be recovered using the values in the positions of $A_i$.
    The set $A_i$ is called the \textit{recovery set} for the $i$-th position.  One of the first recovery schemes of this nature appeared in \cite{huang2013pyramid}.  
    A locally recoverable code $C$ has \textit{availability} $t$ with locality $(r_1, \dots, r_t)$ if for each $i\in \{1,\dots,n\}$ there exist $t$ disjoint recovery sets for position $i$.  When $r_i=r_j=r$ for all $i,j\in\{1,2,\dots, t\}$, we say that the code has \textit{uniform locality} $r$, or simply locality $r$.

\subsection{Evaluation Codes}
Let $S=\{P_1,P_2,\dots, P_n\}$ be a finite set and $L$ an $\mathbb{F}_q$-linear space of functions defined on $S$ with values in $\mathbb{F}_q$.  An \textit{evaluation code} $C(S,L)$ is defined by evaluating the functions in $L$ on the points of $S$:
\[C(S,L)=\{(f(P_1), f(P_2), \dots, f(P_n)): f\in L\}.\]  If the evaluation map of $L$ on $S$ is injective, then $C(S,L)$ is isomorphic to $L$ as a vector space, so has the same dimension as $L$. If $S$ is a subset of the points on an algebraic variety $\mathcal{X}$, the geometric and algebraic structure of $\mathcal{X}$ can often give valuable information about the parameters of $C(S,L)$. This underlying structure can also naturally give rise to locality and availability.  Important examples of evaluation codes are Reed-Solomon codes, Reed-Muller codes, and one-point codes on curves. 

\begin{example}
    Let $S=\{P_1, P_2,\dots, P_q\}$ be the set of all values in $\mathbb{F}_q$, and let $L=L_{k-1}$ for some $k\leq q$ be the set of all polynomials in $\mathbb{F}_q[x]$ of degree at most $k-1$.  Then $C(S,L)$ is called a \textit{Reed-Solomon code} of dimension $k$ over $\mathbb{F}_q$, denoted $RS(q,k)$.  It is not hard to check that this code has length $q$, dimension $k$, and minimum distance $d=q-k+1$.  Thus Reed-Solomon codes meet the Singleton bound, and are in that sense as good as possible!
\end{example}

Reed-Solomon codes are excellent codes that have been used in many settings, including encoding music on compact discs and encoding information for transmission to Earth from the Voyager space probe. We can think of these codes geometrically by seeing the evaluation set as the points on a ``number line" over $\mathbb{F}_q$. In one very influential construction, Tamo and Barg \cite{TamoBargOriginal} devised local recovery methods for certain subcodes of Reed-Solomon codes.

However, the dimension and length of Reed-Solomon codes are by the size of the field $\mathbb{F}_q$.  Because arithmetic becomes more time consuming as the field size grows, we may wish for longer codes over a fixed size field, while keeping the benefits of geometric constructions.  Reed-Muller codes generalize Reed-Solomon codes by growing the evaluation set from a line to an $r$-dimensional space.

\begin{example}
    Let $m$ be a natural number.  Let $S=\{P_1, P_2,\dots, P_{q^m}\}$ be the set of points in the $m$-dimensional space $(\mathbb{F}_q)^m$.  For $r$ a non-negative integer, let $L=L_{r,m}$ be the set of $m$-variate polynomials in $\mathbb{F}_q[x_1,x_2,\dots, x_m]$ of total degree at most $r$, where the degree in $x_i$ is at most $q-1$.  Then $C(S,L)$ is called a \textit{($q$-ary) Reed-Muller code} of degree $r$ in $m$ variables and denoted $RM_q(r,m)$.  The code $RM_q(r,m)$ has length $n=q^m$, dimension $k=\binom{r+m}{m}$, and minimum distance $d=(q-r)q^{r-1}$ when $r\leq q-1$.
\end{example}

Reed-Muller codes with $r\leq q-2$ are locally recoverable with locality $r+1$ and availability $\frac{q^m-1}{q-1}$.  The idea behind local recovery is the following.  If $f\in L$ has total degree $\leq q-2$, then the function $f$ restricted to any line in $(\mathbb{F}_q)^m$ will be a univariate polynomial $\tilde{f}$ of degree $\leq q-2$.  If position $i$ in a codeword is erased, we need to recover the value of $f(P_i)$ where $P_i\in S$.  By choosing any line through $P_i$, we can say that $f(P_i)=\tilde{f}(t_i)$ for some $t_i\in\mathbb{F}_q$, where the values of $\tilde{f}(t)$ are known for the other $q-1$ points on the line, corresponding to $q-1$ distinct values of $t$.  Using Lagrange interpolation, we can determine the coefficients of a degree $q-2$ polynomial with these $q-1$ data points.  Thus every line passing through the point $P_i$ gives a recovery set for position $i$.  Reed-Muller codes are also widely studied, and have applications beyond cloud storage, including private information retrieval.  

Another way of generalizing the Reed-Solomon code to increase the length is to use the points on a curve in the plane as the evaluation set, instead of line.  A \textit{plane curve} can be a defined as the set of solutions to a non-trivial polynomial equation in two variables.  The \textit{Hermitian curve} $\mathcal{H}_q$ can be defined by the equation $x^q+x+y^{q+1}=0$, and we often search for points over the field $\mathbb{F}_{q^2}$. Note that the Hermitian curve is defined given by an equation of the form $x^q+x-y^{q+1}=0$, but these are equivalent under the change of variables $(x,y)\mapsto (-x,y)$, and the given form simplifies the algebra in our proof. The Hermitian curve has many beautiful properties, including an exceptional number of symmetries and as many points as possible for a curve of its complexity over a $\mathbb{F}_{q^2}$.  Goppa introduced algebraic geometry (AG) codes using Riemann-Roch spaces in \cite{goppa1982algebraico}.  We do not use the full generality here, but remark that these AG codes have provided remarkable examples of long codes with desirable parameters. See \cite{walker:textbook} for a discussion.  We also note that Tamo, Barg, and Vladut also defined locally recoverable codes and codes with availability on the curves more generally using ideas from covering maps \cite{BTV}, but the construction here is instead based on intersections. We will describe the most relevant example for our work, a Hermitian one-point code.  The term ``one-point" refers to the fact that this is an evaluation code where the evaluated functions are only allowed to have poles at the single ``point at infinity" on the curve. Hermitian-lifted codes are closely related to Hermitian one-point codes.
\begin{example}
 Let $S=\{P_1,P_2,\dots, P_{q^3}\}$ be the set of pairs $(x,y)\in(\mathbb{F}_{q^2})^2$ satisfying $x^q+x+y^{q+1}=0$.  For future reference, we will use $\mathcal{H}_q(\mathbb{F}_{q^2})$ to denote this set.  Using the properties of the field norm and trace maps $\mathbb{F}_{q^2}\rightarrow \mathbb{F}_q$, it is not hard to see that there are $q^3$ such pairs.  For a non-negative integer $m$, let $L=L_m$ be the $\mathbb{F}_{q^2}$-linear space of functions generated by the set \[\{x^iy^j:0\leq j\leq q-1,iq+j(q+1)\leq r\}.\]  We then define the Hermitian one-point code to be $C(S,L)$, and denote this code by $C_{q,r}$.  The Hermitian one-point code has length $n=q^3$.  If $q^2-q-2< r < q^3$, the code has dimension $k=r+1-\frac{q(q-1)}{2}$.  The minimum distance satisfies $d\geq n-r$ when $r<q^3$ \cite[Proposition 8.3.3]{stichtenoth2009algebraic}.
\end{example} 

Using the same idea as in Reed-Muller codes, Hermitian one-point codes are also locally recoverable with high availability.  To see why, we note that an exceptional property of the Hermitian curve is that if you work in projective space and include a point at infinity on the curve, then every non-tangent line to the curve in projective space intersects in exactly $q+1$ points, where as tangent lines intersect the curve in exactly 1 point.  Since we do not wish to include a full discussion of projective space here (again, see \cite{walker:textbook} for more discussion), we will make use of this custom version.

\begin{fact}\cite{malmskog:article} Every non-horizontal line in the plane intersects $\mathcal{H}_q$ in either 1 or $q+1$ points.
\end{fact}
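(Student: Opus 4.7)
The plan is to parametrize each non-horizontal line in the plane as $x = ay + b$ with $a, b \in \mathbb{F}_{q^2}$ (this captures every vertical line and every line of nonzero slope), substitute into the defining equation $x^q + x + y^{q+1} = 0$, and reduce the intersection count to counting preimages of the norm map $w \mapsto w^{q+1}$ from $\mathbb{F}_{q^2}^*$ to $\mathbb{F}_q^*$.

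First I would substitute $x = ay + b$ and expand using $(ay + b)^q = a^q y^q + b^q$ to obtain
\[ y^{q+1} + a^q y^q + a y + (b + b^q) = 0. \]
Next I would perform the affine change of variable $y = w - a^q$, chosen precisely to annihilate the terms linear in $w$ and in $w^q$. Expanding $(w - a^q)^{q+1} = (w - a^q)(w^q - a)$ using $a^{q^2} = a$ and combining with the $a^q y^q$ and $ay$ contributions, all cross terms cancel and one is left with
\[ w^{q+1} = a^{q+1} - (b + b^q). \]

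The right hand side is a norm minus a trace from $\mathbb{F}_{q^2}$ down to $\mathbb{F}_q$, hence lies in $\mathbb{F}_q$. The map $w \mapsto w^{q+1}$ is a surjective group homomorphism $\mathbb{F}_{q^2}^* \to \mathbb{F}_q^*$ whose kernel has order $(q^2 - 1)/(q - 1) = q + 1$, so every nonzero element of $\mathbb{F}_q$ has exactly $q + 1$ preimages in $\mathbb{F}_{q^2}$, while $0$ has the unique preimage $w = 0$. Consequently the line meets $\mathcal{H}_q$ in exactly $1$ point when $a^{q+1} = b + b^q$ (the tangent case) and in exactly $q + 1$ points otherwise, which is the desired dichotomy.

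The main bookkeeping obstacle is verifying that the shift $y = w - a^q$ truly kills every linear and $q$-power-linear contribution; this reduces to the single identity $a^{q^2} = a$ valid for all $a \in \mathbb{F}_{q^2}$. Apart from this computation the argument is immediate from the structure of the norm map, so I do not expect any serious difficulty.
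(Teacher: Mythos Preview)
Your argument is correct. The paper does not supply its own proof of this Fact: it is stated with a citation to \cite{malmskog:article} and used as a black box. Hence there is no ``paper's proof'' to compare against, but your computation is exactly the one implicit in the paper's later definition of $p_{\alpha,\beta}(t) = t^{q+1} + \alpha^q t^q + \alpha t + (\beta + \beta^q)$, and your reduction via the shift $y = w - a^q$ to the equation $w^{q+1} = a^{q+1} - (b + b^q)$ is clean and complete. The only point worth making explicit for a reader is that $a^{q+1} - (b + b^q) \in \mathbb{F}_q$ (as you note), so that surjectivity of the norm onto $\mathbb{F}_q^*$ is genuinely applicable; you have this, so nothing is missing.
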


 In the Hermitian one-point code, similar to the Reed-Muller case, the recovery sets arise from the intersections of lines with the curve, where the value of a function at any point on a line can be recovered by the values of the function at the remaining points on the line.

\begin{theorem}\cite{malmskog:article} Hermitian one-point code $C_{q,q^2-1}$ has locality $q$ and availability $q^2-1$.
\end{theorem}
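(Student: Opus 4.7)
The plan is to mimic the Reed--Muller local-recovery idea, using the given fact that every non-horizontal line meets $\mathcal{H}_q$ in either $1$ or $q+1$ points. Fix an affine point $P=(x_0,y_0)\in \mathcal{H}_q(\mathbb{F}_{q^2})$ whose value we wish to recover. I want to produce $q^2-1$ pairwise disjoint recovery sets of size $q$ by intersecting $\mathcal{H}_q$ with the $q^2-1$ ``secant'' non-horizontal lines through $P$, and then show Lagrange interpolation in one variable actually recovers $f(P)$ from any such set. Distinct lines through $P$ meet only at $P$, so removing $P$ from each intersection automatically yields disjoint sets; everything reduces to (a) counting the right lines and (b) controlling the degree of $f$ restricted to a line.

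For the counting step, I will parametrize the lines through $P$ by slope $m\in \mathbb{F}_{q^2}\cup\{\infty\}$, giving $q^2+1$ lines; excluding the single horizontal line leaves exactly $q^2$ non-horizontal lines through $P$. The tangent line to $\mathcal{H}_q$ at $P$ is determined by the partial derivatives of $F(x,y)=x^q+x+y^{q+1}$: one has $\partial F/\partial x=1$ and $\partial F/\partial y=y^q$, so the tangent at $P$ is $x+y_0^q y=x_0+y_0^{q+1}$, which is vertical when $y_0=0$ and has slope $-y_0^{-q}\neq 0$ otherwise --- in every case non-horizontal. Since a tangent line intersects $\mathcal{H}_q$ in a single point, and any line through $P$ meeting the curve in just one point must be the tangent at $P$ (that single point has to be $P$ itself), precisely one of the $q^2$ non-horizontal lines is the ``$1$-point'' case of the fact. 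The remaining $q^2-1$ lines all intersect $\mathcal{H}_q$ in $q+1$ points, each producing a recovery set of the needed size $q$.

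For the recovery step, I will exploit the defining degree bounds for $L_{q^2-1}$: a monomial $x^iy^j$ with $0\le j\le q-1$ and $iq+j(q+1)\le q^2-1$ satisfies $(i+j)q\le iq+j(q+1)\le q^2-1$, forcing $i+j\le q-1$. Hence every $f\in L_{q^2-1}$ is a bivariate polynomial of total degree at most $q-1$. Restricting to a non-horizontal line, parametrized either by $x$ on a slanted line $y=mx+b$ or by $y$ on the vertical line $x=x_0$, therefore yields a univariate polynomial $\tilde f(t)$ of degree at most $q-1$. On a secant line the $q+1$ intersection points carry $q+1$ distinct parameter values (distinct $x$-coordinates on a slanted line, distinct roots of $y^{q+1}=-x_0^q-x_0$ on the vertical line when that constant is nonzero, i.e.\ exactly when the line is not tangent). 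Any $q$ of these values uniquely determine $\tilde f$ by Lagrange interpolation, and therefore determine $\tilde f$ at the missing $t$-value, which is $f(P)$.

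Assembling the pieces gives $q^2-1$ mutually disjoint recovery sets of size $q$ for each position, which is the advertised availability $q^2-1$ with uniform locality $q$. I expect the principal technical steps to be (i) verifying that the tangent line at every point $P\in\mathcal{H}_q(\mathbb{F}_{q^2})$ is non-horizontal, so that the single ``bad'' line through $P$ predicted by the fact is exactly the one I want to discard, and (ii) pushing the degree inequality $i+j\le q-1$ carefully enough that one never has to invoke the curve relation $y^{q+1}=-x^q-x$ to reduce $\tilde f$: the raw bound from $L_{q^2-1}$ is already $\le q-1$, which is exactly what Lagrange interpolation on $q$ data points requires. Everything else is routine planar geometry (two distinct lines through a common point are disjoint elsewhere) and linear algebra over $\mathbb{F}_{q^2}$.
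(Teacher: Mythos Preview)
Your proposal is correct and follows essentially the same approach as the paper: parametrize the non-horizontal lines through a point, discard the unique tangent to obtain $q^2-1$ secant lines whose remaining $q$ intersection points form disjoint recovery sets, deduce $i+j\le q-1$ from $iq+j(q+1)\le q^2-1$, and recover via Lagrange interpolation of the restricted univariate polynomial. You supply more detail than the paper on the tangent-line count and the distinctness of parameter values, but the argument is the same.
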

\begin{proof}
Let each index $i$ correspond to a point $P_i$ in $S=H_q(\mathbb{F}_{q^2})$. For any $\alpha,\beta\in\mathbb{F}_{q^2}$, let $L_{\alpha,\beta}:\mathbb{F}_{q^2}\rightarrow (\mathbb{F}_{q^2})^2$ so that $L_{\alpha,\beta}(t)=(\alpha t+\beta,t)$. Let $\mathcal{L}_{\alpha,\beta}$ be the image of this map.  For any line passing through $P_i$ that is not tangent, then let $R_{i,\alpha}=H_q(\mathbb{F}_{q^2}\cap \mathcal{L}_{\alpha,\beta}\setminus \{P_i\})$. Note that $|R_{i,\alpha}|=q$ and there are $q^2-1$ disjoint sets for each $i$.

Any codeword $\vec{c}_f$ in $C_{q,q^2-1}$ arises from the evaluation of a polynomial $f$ that is a linear combination of $x^ay^b$ satisfying $b\leq q-1$ and $aq+b(q+1)\leq q^2-1$. Using these inequalities and the fact that $a$ and $b$ are integers, this implies that
$$a+b\leq q-1.$$

Since the total degree of $f$ is at most $q-1$, we have that $f$ restricted to $\mathcal{L}_{\alpha,\beta}$ is a univariate polynomial $\tilde{f}$ of degree at most $q-1$. Given $q$ values of $\tilde{f}$ from the other $q$ points on $\mathcal{L}_{\alpha,\beta}$, we can interpolate the value of $\tilde{f}$ at $P_i$. 
\end{proof}

\section{The Hermitian-Lifted Code}

The high availability and long length relative to field size make the Hermitian one-point code appealing to study. However, the rate of the Hermitian one-point code $C_{q,q^2-1}$ is 
\[\frac{(q^2-1)+1-\frac{q(q-1)}{2}}{q^3}=\frac{2q-1}{2q^2},\] which approaches 0 as $q$ grows.  The Hermitian-lifted code shares many properties with the Hermitian one-point code, but we will prove that it has the advantage of a higher rate. The Hermitian-lifted code has the same evaluation points the Hermitian one-point code, but it extends the set of functions that can be used to form codewords. The idea comes from lifted Reed-Solomon codes, first introduced by Guo, Kopparty, and Sudan \cite{guo2013new}. Their lifted Reed-Solomon codes can be thought of as Reed-Muller codes with additional functions.  Instead of only using functions with low total degree, lifted Reed-Solomon codes include higher degree polynomials with the special property that, when restricted to each line in space, the restricted function has low enough degree that missing values can be interpolated from remaining values corresponding to other points on the line.  The surprising insight of \cite{guo2013new} is that there are many such polynomials, enough to greatly increase the rate of the lifted Reed-Solomon code over the standard Reed-Muller code. 

The definitions in this section come directly or are modified from \cite{malmskog:article}, where the Hermitian-lifted code was first described.

\begin{definition}
For $L_{\alpha,\beta}=(\alpha t+\beta,t)$ $\alpha,\beta\in \mathbb{F}_{q^2}$, $f\in\mathbb{F}_{q^2}[x,y]$ and $g\in\mathbb{F}_{q^2}[t]$, we say that $f$ agrees with $g$ on the intersection of $H_q(\mathbb{F}_{q^2})$ and $L_{\alpha,\beta}$ if $f(L_{\alpha,\beta}(t))=g(t)$ for all $t\in\mathbb{F}_{q^2}$ with $L_{\alpha,\beta}(t)\in \mathcal{H}_q(\mathbb{F}_{q^2})$.
\end{definition}

\begin{definition}
Given a prime power $q$, let $\mathcal{L}=\{L_{\alpha,\beta}:\alpha,\beta\in\mathbb{F}_{q^2}\}$ is the set of all lines of the form $L_{\alpha,\beta}(t)=(\alpha t+\beta, t)$.  Let $\mathcal{F}$ be the set of all $f\in\mathbb{F}_{q^2}[x,y]$ such that for each $L\in\mathcal{L}$ there exists $g\in\mathbb{F}_{q^2}[t]$ so that $\deg(g)\leq q-1$ and $f$ agrees with $g$ on the intersection of $\X$ and $L$.

\end{definition}

In other words, $\mathcal{F}$ is the set of all functions $f$ such that for every line $L$, plugging in $f(L(t))$ yields a univariate polynomial that restricts to degree at most $q-1$ on the points of intersection between the curve and the line.  

Now we define Hermitian-lifted codes.

\begin{definition}
Let $q$ be a prime power and let $\mathcal{F}$ be defined as above. Then the Hermitian-lifted code $\mathcal{C}_q$ is the evaluation code
$C(\mathcal{H}_q(\mathbb{F}_{q^2}),\mathcal{F})$.
\end{definition}

The length of $\mathcal{C}_q$ is $q^3$ because $|\mathcal{H}_q(\mathbb{F}_{q^2})|=q^3$. Each codeword is the vector created by evaluating on all points of the Hermitian curve a function $f$ that reduces to degree at most $q-1$ on every line intersected with the Hermitian curve.

The Hermitian one-point code $C_{q,q^2-1}$ and the Hermitian-lifted code $\mathcal{C}_q$ are closely related, so we restate that the difference between them is simply in the functions $f$ that are used in each. In the one-point code $C_{q,q^2-1}$, the functions $f$ must satisfy $f\in L_{q^2-1}$, which is a subset of the two variable polynomials of total degree at most $q-1$. In the Hermitian-lifted code, the functions must satisfy $f\in \mathcal{F}$, which is all functions of unbounded total degree but having degree at most $q-1$ when restricted to the intersection of the curve and each non-horizontal line. Although the distinction between these two codes is seemingly insignificant, the main theorem proves that the slight difference in the set of functions $\mathcal{F}$ and $L_{q^2-1}$ is enough to bound the rate of the Hermitian-lifted code away from 0 as $q$ goes to infinity.

\section{Main Theorem and Proof}\label{sec:proof}

The authors of \cite{malmskog:article} prove that the rate of the Hermitian-lifted code as $q\rightarrow \infty$ when $q=2^l$ is bounded below by a positive constant. 

\begin{theorem}\cite{malmskog:article}\label{main2}
Suppose $q=2^l$ where $l\geq 2$ and $\mathcal{C}_q$ be the Hermitian-lifted code. Then the rate of $\mathcal{C}_q$ is at least $0.007$.
\end{theorem}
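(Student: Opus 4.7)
The plan is to lower bound the dimension of $\mathcal{C}_q$ by exhibiting an explicit family of monomial functions $x^a y^b \in \mathcal{F}$ whose evaluations on $\mathcal{H}_q(\mathbb{F}_{q^2})$ are linearly independent. Since the length of $\mathcal{C}_q$ is $q^3$, producing at least $0.007\cdot q^3$ such monomials yields the stated rate bound.

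First I would translate the defining condition for $\mathcal{F}$ into a concrete reduction. For a monomial $x^a y^b$ and a line $L_{\alpha,\beta}(t)=(\alpha t+\beta,t)$, substitution yields the univariate polynomial $(\alpha t+\beta)^a t^b$. On the $q+1$ intersection points of $L_{\alpha,\beta}$ with $\mathcal{H}_q(\mathbb{F}_{q^2})$, the coordinate $t$ runs over the roots of
$$t^{q+1}+\alpha^q t^q+\alpha t+\beta^q+\beta=0,$$
so I can expand $(\alpha t+\beta)^a$ via the binomial theorem, reduce powers $t^m$ with $m\ge q+1$ using this relation, and further reduce modulo $t^{q^2}=t$. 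The monomial lies in $\mathcal{F}$ precisely when, for every $(\alpha,\beta)\in\mathbb{F}_{q^2}^2$, the reduced polynomial has degree at most $q-1$ in $t$, i.e., the coefficients of $t^q, t^{q+1}, \dots$ vanish identically as polynomials in $\alpha$ and $\beta$.

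The heart of the argument is to find a combinatorial subfamily of exponent pairs $(a,b)$ for which this vanishing is automatic. In characteristic $2$, Lucas' theorem applied to $\binom{a}{j}\bmod 2$ controls which binomial coefficients survive, and the Frobenius identity $(\alpha t+\beta)^q=\alpha^q t^q+\beta^q$ keeps the reduction clean. I would impose digit-level conditions on the base-$2$ (equivalently base-$q$) expansions of $a$ and $b$ that kill every offending power of $t$ either by a vanishing binomial coefficient mod $2$ or by a Frobenius-driven cancellation. A careful combinatorial count of admissible $(a,b)$ inside the natural parameter box $0\le a\le q^2-1$, $0\le b\le q-1$ should then produce at least $0.007\cdot q^3$ good pairs once $l\ge 2$.

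Finally, I would confirm linear independence of the evaluations: after reducing modulo the vanishing ideal $(x^{q^2}-x,\,y^{q^2}-y,\,x^q+x+y^{q+1})$ of $\mathcal{H}_q(\mathbb{F}_{q^2})$, the monomials in the chosen box with $b\le q-1$ are distinct and give independent evaluation vectors by a standard monomial-ordering argument. The main obstacle I anticipate is the middle step: extracting a digit-level sufficient condition clean enough that every coefficient of $t^m$ with $m\ge q$ provably vanishes, and whose solution set in the parameter box has size at least $0.007\cdot q^3$. The constant $0.007$ is not conceptually meaningful but reflects the efficiency of the chosen sufficient condition; the characteristic-$2$ machinery (Frobenius linearity and Lucas mod $2$) is what makes the counting tractable here and is precisely what fails for odd primes, which is the gap the present paper will close.
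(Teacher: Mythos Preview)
Your high-level plan---exhibit many good monomials $M_{a,b}=x^ay^b$ via Lucas-type digit constraints and count them---is exactly the strategy of \cite{malmskog:article}, which this paper quotes without reproving (the paper's own Lemmas~\ref{pkcondtion}--\ref{goodproperties} and Theorem~\ref{main} redo the same argument for odd $p$). Two small slips: your parameter box is transposed (the linearly independent family is $0\le a\le q-1$, $0\le b\le q^2-1$, since the curve relation eliminates high powers of $x$, not of $y$); and once you reduce modulo the degree-$(q+1)$ polynomial $p_{\alpha,\beta}(t)=t^{q+1}+\alpha^qt^q+\alpha t+\beta^q+\beta$ the remainder already has degree $\le q$, so only the $t^q$ coefficient must vanish and your extra reduction modulo $t^{q^2}-t$ is irrelevant.

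The substantive gap is the step you yourself flag as ``the main obstacle.'' After the Lucas expansion $(\alpha t+\beta)^a t^b=\sum_{j\le_2 a}\binom{a}{j}\alpha^j\beta^{a-j}t^{b+j}$, you need to know for which exponents $k=b+j$ the monomial $t^k$ reduces to degree $<q$ modulo $p_{\alpha,\beta}$, \emph{uniformly in $\alpha,\beta$}. Your proposed mechanism (``vanishing binomial coefficient or Frobenius-driven cancellation'') does not reach this: the reduction of $t^k$ modulo $p_{\alpha,\beta}$ has no simple Frobenius description, and cancellation among different $j$'s is not what happens. The argument in \cite{malmskog:article} instead passes through the power sums $P_k=\sum_i\sigma_i^k$ of the roots of $p_{\alpha,\beta}$, proves via Lagrange interpolation that $\deg_{\alpha,\beta}(t^k)<q\iff P_{k+1}=-\alpha^qP_k$, and then establishes a Kronecker-product factorization of the matrix $(P_{wq+z})_{0\le w,z<q}$ which yields the clean sufficient condition ``$w\equiv 0\pmod{2^i}$ and $z\not\equiv -1\pmod{2^i}$ for some $i$'' on $k=wq+z$. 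Only with this digit criterion in hand can one write down the conditions in Lemma~\ref{goodproperties} and carry out the geometric-series count that produces the constant $0.007$. Without the power-sum detour your plan does not yet contain a usable sufficient condition, so the counting step cannot be executed.
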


The aim of this paper is to extend this result to show that the rate of the Hermitian-lifted code is also bounded below by a positive constant when $q=p^l$ when $p$ is any odd prime. Despite the fact that $0.007$ is a very small lower bound, it demonstrates a significant difference between the Hermitian-lifted code and the Hermitian one-point code, which has a rate that converges to $0$ as $q$ increases. 

Finding that the rate of the Hermitian-lifted code is bounded below by a number greater than $0$ proves that the set of functions $\mathcal{F}\setminus L_{q^2-1}$ is large. The proof of Theorem ~\ref{main2} in \cite{malmskog:article} follows the appoach of finding a large set of functions included in the lifted code but not in the one-point code. We adopt the same techniques to prove the following main result.

\begin{theorem*}
   Suppose that $q=p^l$ where $p$ is an odd prime and $l\geq 2$. Then the rate of $\mathcal{C}_q$ is at least $$\frac{.469}{p^4(p-1)(p^3-p^2+1)}.$$
\end{theorem*}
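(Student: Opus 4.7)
The plan is to mimic the scheme of \cite{malmskog:article}: exhibit an explicit family of monomials $x^a y^b \in \mathcal{F}$, verify that they remain linearly independent as functions on $\X$, and count them to bound $\dim(\mathcal{C}_q)$ from below. Since the length is $n = q^3$, dividing the count by $q^3$ then gives the rate bound.

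First I would translate the condition $x^a y^b \in \mathcal{F}$ into an explicit numerical condition on $(a,b)$. Restricting to a line $L_{\alpha,\beta}(t) = (\alpha t + \beta, t)$ produces the univariate polynomial $(\alpha t + \beta)^a t^b = \sum_{j=0}^{a} \binom{a}{j} \alpha^{a-j}\beta^{j} t^{\,a-j+b}$. Because the $t$-values that parameterize points of $L_{\alpha,\beta} \cap \X$ lie in $\mathbb{F}_{q^2}$ and satisfy a degree-$(q+1)$ relation coming from the Hermitian equation, we may reduce each exponent using $t^{q^2} \equiv t$ (and the line-intersection relation) and call a monomial \textit{good} when every surviving $t$-exponent (that is, every exponent whose binomial coefficient does not vanish in characteristic $p$) lies in $\{0, 1, \dots, q-1\}$. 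By Lucas' theorem, nonvanishing of $\binom{a}{j} \pmod{p}$ is a digit-wise condition on the base-$p$ expansions of $a$ and $j$, so the goodness test can be recast as a purely combinatorial constraint on the base-$p$ digits of $(a,b)$.

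With the goodness criterion in hand, I would produce a sufficient condition phrased purely on the base-$p$ digits. Writing $a = \sum_{i=0}^{l-1} a_i p^i$ and $b = \sum_{i=0}^{l-1} b_i p^i$ with digits in $\{0,\dots,p-1\}$, I expect --- following the structure of \cite{malmskog:article} --- to identify a safe set $D \subseteq \{0,\dots,p-1\}^2$ of admissible digit pairs and a controlled set of extra constraints on the top digits to prevent carries from pushing a reduced exponent above $q-1$. Counting digit strings lying in $D^{l}$ modulo a bounded number of boundary configurations yields on the order of $|D|^{l}$ good monomials; matching this against $q^3 = p^{3l}$ produces an $l$-independent ratio. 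The denominator $p^4(p-1)(p^3-p^2+1)$ should emerge naturally here: the factor $p^3 - p^2 + 1$ counts the forbidden digit configurations per index, while $p^4(p-1)$ accounts for fixed constraints on the leading digits of $a$ and $b$ needed to keep the global reduction under control.

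After securing the count, I would verify linear independence of the selected monomials as functions on $\X$. The Hermitian relation $y^{q+1} = -(x^q + x)$ allows any $y$-power at least $q+1$ to be replaced by lower-$y$-degree terms, so restricting attention to monomials with $0 \leq b \leq q$ guarantees that distinct exponent pairs yield linearly independent evaluation vectors on the curve; intersecting the good family with this range costs only a controlled factor. Combining the count with independence yields the dimension bound, and dividing by $q^3$ gives the stated rate.

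The main obstacle is the number-theoretic bookkeeping that certifies which digit patterns actually lead to good monomials. The expansion $(\alpha t + \beta)^a t^b$ produces many cross terms, and after reduction each surviving exponent must be shown to be at most $q - 1$; this requires tracking base-$p$ carries and, for odd $p$, the much richer collection of nonvanishing Lucas binomials than the $p=2$ analysis in \cite{malmskog:article} had to handle. I would address this by proving a digit-wise lemma stating that goodness holds whenever the digit pairs of $(a,b)$ avoid a specific small bad set, and then invoke elementary counting to bound the fraction of surviving monomials. Pushing the explicit constant $0.469$ --- essentially inherited from the technique of \cite{malmskog:article} --- through this more elaborate digit analysis for all odd $p$ is the technical heart of the argument, after which the claimed rate bound follows by a routine computation.
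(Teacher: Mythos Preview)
Your outline has two genuine gaps that would prevent it from reaching the stated bound.

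First, the range you impose on $b$ is fatal. You write $b=\sum_{i=0}^{l-1} b_i p^i$ and later say you will ``restrict attention to monomials with $0\le b\le q$,'' claiming this ``costs only a controlled factor.'' It does not: with $a,b<q$ there are at most $q^2$ monomials in total, so the rate you can certify this way is at most $q^2/q^3=1/q\to 0$, which is exactly the Hermitian one-point behavior the theorem is meant to beat. The paper works with $0\le a\le q-1$ and $0\le b\le q^2-1$ (its Lemma on linear independence covers this full range), and its sufficient condition writes $b=wq+b'$ with separate digit constraints on $w$ and $b'$; the factor of $q$ coming from the freedom in $w$ is what makes the count scale like $q^3$.

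Second, your reduction mechanism is not the right one. You propose to ``reduce each exponent using $t^{q^2}\equiv t$ (and the line-intersection relation)'' and then test whether ``every surviving $t$-exponent lies in $\{0,1,\dots,q-1\}$.'' But dividing $t^k$ by the degree-$(q+1)$ polynomial $p_{\alpha,\beta}(t)=t^{q+1}+\alpha^q t^q+\alpha t+\gamma$ does not merely lower an exponent; it produces a genuine polynomial of degree at most $q$ whose coefficients depend on $\alpha,\beta$. The question is whether the $t^q$-coefficient of that remainder vanishes for \emph{every} $\alpha,\beta$, and this is not a digit-of-$k$ condition in any direct way. The paper's actual engine is an analysis of the power sums $P_k=\sum_i\sigma_i^k$ of the roots of $p_{\alpha,\beta}$: it shows $\deg_{\alpha,\beta}(t^k)<q$ iff $P_{k+1}=-\alpha^q P_k$, then proves a Kronecker-product identity for the matrix $(P_{iq+j})$ in terms of explicit $p\times p$ blocks. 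Only after this does one obtain the clean sufficient condition (if $k=wq+z$ with $w\equiv 0\pmod{p^i}$ and $z\not\equiv -1\pmod{p^i}$ for some $i$, then $\deg_{\alpha,\beta}(t^k)<q$), which in turn feeds the digit-level criterion on $(a,b)$ and the final count. Your sketch has no analogue of this step, and the purely combinatorial ``safe set $D$'' picture you describe cannot substitute for it.
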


Although the bound decreases as $p$ increases, the bound always remains positive. The remainder of this section will be the proof to the above Theorem, also referred to as Theorem ~\ref{main}. First, we will describe the set of functions that will provide a lower bound on the dimension of the code, a set which we call \textit{good monomials}. Then, we will count the minimum number of functions that must be good monomials, which will lead to a lower bound on a dimension. The final proof of the bound in Theorem ~\ref{main} will conclude this section.

\subsection{Good Monomials}

We begin with a lemma from \cite{malmskog:article} giving a large set of functions on $\mathcal{H}_q$ leading to linearly independent evaluation vectors. Since this lemma is proven for all $p$ in \cite{malmskog:article}, we omit a proof here. 
\begin{lemma}
\cite{malmskog:article} Let $q=p^l$ for $p$ any prime, $l$ a natural number.  Let $M_{a,b}(x,y)=x^ay^b$. Then the set of vectors $\{(M_{a,b}(P_i))_{P_i\in \X}:0\leq a \leq q-1,0\leq b\leq q^2-1\}$ are linearly independent.
\end{lemma}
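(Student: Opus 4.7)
The plan is to reduce the bivariate linear independence claim to two successive univariate polynomial vanishing problems, exploiting the fact that the projection $(x,y)\mapsto y$ from $\X$ onto $\mathbb{F}_{q^2}$ is everywhere $q$-to-$1$. Suppose a nontrivial relation $\sum_{a=0}^{q-1}\sum_{b=0}^{q^2-1} c_{a,b}\, x^{a}y^{b}$ vanishes at every point of $\X$. I would regroup by powers of $x$ and rewrite the sum as $\sum_{a=0}^{q-1}P_a(y)\,x^{a}$, where $P_a(y)=\sum_{b=0}^{q^2-1}c_{a,b}\,y^{b}\in\mathbb{F}_{q^2}[y]$ has degree at most $q^2-1$.

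The key structural input is that for every fixed $y_0\in\mathbb{F}_{q^2}$, the norm $y_0^{q+1}$ lies in $\mathbb{F}_q$, while the additive operator $x\mapsto x^q+x$ is precisely the trace map $\mathbb{F}_{q^2}\to\mathbb{F}_q$, which is $\mathbb{F}_q$-linear and surjective with kernel of size $q$. Consequently the fiber equation $x^q+x=-y_0^{q+1}$ has exactly $q$ distinct solutions $x_0,\ldots,x_{q-1}\in\mathbb{F}_{q^2}$ for every choice of $y_0$. Evaluating the hypothesized relation on this fiber shows that the univariate polynomial $Q(X)=\sum_{a=0}^{q-1}P_a(y_0)\,X^{a}$, of degree at most $q-1$, has $q$ distinct roots; hence $Q\equiv 0$, and $P_a(y_0)=0$ for every $a$.

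Since this holds for every $y_0\in\mathbb{F}_{q^2}$, each $P_a(y)$ is a univariate polynomial of degree at most $q^2-1$ that vanishes on all $q^2$ elements of $\mathbb{F}_{q^2}$, and therefore $P_a\equiv 0$. This forces $c_{a,b}=0$ for every $(a,b)$, establishing linear independence. The only nontrivial ingredient is the uniform fiber-size calculation, which is an immediate consequence of the trace--norm structure of the Hermitian equation; once that is in hand, the argument is a direct two-stage application of the standard ``a polynomial with more roots than its degree is zero'' principle, and the exponent ranges $0\le a\le q-1$ and $0\le b\le q^2-1$ are exactly the sizes matched by these two root counts.
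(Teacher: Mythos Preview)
Your argument is correct. The fiber count is exactly right: for each $y_0\in\mathbb{F}_{q^2}$ the norm $y_0^{q+1}$ lies in $\mathbb{F}_q$, and the trace $x\mapsto x^q+x$ is $\mathbb{F}_q$-linear and surjective, so the fiber contains precisely $q$ distinct $x$-values. The two-stage root-counting argument then goes through cleanly, with the exponent bounds $a\le q-1$ and $b\le q^2-1$ matching the two root counts perfectly.

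As for comparison with the paper: the paper does not prove this lemma at all. It is stated with a citation to \cite{malmskog:article} and the authors explicitly write that they omit the proof since it was already established there for all primes $p$. Your self-contained elementary argument is therefore a genuine addition rather than a reproduction; it avoids any appeal to Riemann--Roch spaces or the theory of one-point codes and instead extracts the result directly from the trace/norm structure of the Hermitian equation.
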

 
Since these monomials give linearly independent evaluation vectors, any monomial in this set that is also in $\mathcal{F}$ will contribute to the dimension of the code.  Before we can determine which monomials are fit this description, we will introduce another definition. Given a line of the form $L_{\alpha,\beta}(t)=(\alpha t+\beta,t)$, we find a polynomial with zeros at exactly the points of intersection between the Hermitian curve and the line.  If $M_{a,b}(L_{\alpha,\beta}(t))$ has degree at most $q-1$ when reduced modulo this polynomial for all $\alpha, \beta\in\mathbb{F}_{q^2}$, then $M_{a,b}$ is good.  The folowing definition generalizes the corresponding statement in \cite{malmskog:article}.

\begin{definition} For any prime power $q$, let $h(x,y)=y^{q+1}+x^q+x$. For any $\alpha,\beta\in\mathbb{F}_{q^2}$, define 
\[p_{\alpha,\beta}(t)=h(\alpha t+\beta,t)=t^{q+1}+\alpha^qt^q+\alpha t+(\beta+\beta^q)=t^{q+1}+a^qt^q+\alpha t+\gamma.\]
For $g(t)\in\mathbb{F}_{q^2}[t]$, $\bar{g}(t)$ is the remainder when $g(t)$ is divided by $p_{\alpha,\beta}(t)$.
Let $deg_{\alpha,\beta}(g)=deg(\bar{g}_{\alpha,\beta}(t))$. 
\end{definition}
Note that $\deg_{\alpha,\beta}(g)\leq q$ for all $g\in\mathbb{F}_{q^2}[t]$ because $\deg(p_{\alpha,\beta}(t))=q+1$.

For a line $L_{\alpha,\beta}(t)=(\alpha t+\beta,t)$, we have $M_{a,b}(L_{\alpha,\beta}(t))$ agrees with polynomial $g$ of degree strictly less than $q$ on $\X$ if and only if $\text{deg}_{\alpha,\beta}(M_{a,b}(L_{\alpha,\beta}(t))<q$. Write \[M_{a,b}(L_{\alpha,\beta}(t))=h(t)p_{\alpha,\beta}(t)+g(t)\] for $\deg(g)\leq q$. The roots of $p_{\alpha,\beta}(t)$ are exactly the $t$-values in the parameterization of the line $L_{\alpha,\beta}$ which intersect the Hermitian curve, and these roots satisfy $0 = t^{q+1}+(\alpha t+\beta )^q+\alpha t+\beta $ so $p_{\alpha,\beta}(t)=0$. Thus, $M_{a,b}(L_{\alpha,\beta}(t))$ agrees with $g(t)$ on the intersection of $\X$ and the line $L_{\alpha,\beta}$. 
\begin{definition}
    For a monomial $M_{a,b}(x,y)$, let $g_{a,b}(t)=M_{a,b}(L_{\alpha,\beta}(t))=(\alpha t+\beta)^at^b$. We define a \textit{good monomial} to be any $M_{a,b}(x,y)$ such that $0\leq a\leq q-1$, $0\leq b \leq q^2-1$, and $g_{a,b}(t)$ satisfies $\deg_{\alpha,\beta}(g_{a,b})\leq q-1$ for all $\alpha, \beta\in\mathbb{F}_{q^2}$.
\end{definition}

\subsection{Conditions of Good Monomials}

In what follows, we assume that the line $L_{\alpha,\beta}(t)=(\alpha t+\beta,t)$ is a line that goes through $q+1$ points of the Hermitian curve, where $\alpha, \beta\in\mathbb{F}_{q^2}$. We ignore tangent lines because all monomials agree with to constant functions on a single point of intersection.  Also, we let $\gamma=\beta+\beta^q\in\mathbb{F}_q$.  For simplicity, we let $p(t)=p_{\alpha,\beta}(t)$.

Let $\sigma_0,\ldots,\sigma_q$ be the roots of $p(t)$. We know there are $q+1$ roots because there are $q+1$ points in the intersection of the line and the Hermitian curve. Thus,
\[p(t)=t^{q+1}+\alpha^qt^q+\alpha t+\gamma=(t-\sigma_0)\cdots(t-\sigma_q)=c_0t^{q+1}+c_1t^q+\cdots+c_qt+c_{q+1}\]
where $c_k=\sum_{S\subset\{0,\ldots,q\},|S|=k}\prod_{l\in S}\sigma_l$ for $k=0,\ldots,q$. This is just given by the expansion of the product above.

For any $k\geq 0$ we define the element $P_k=\sum_{i=0}^q\sigma_i^k$. These values will be used to find a condition for good monomials. The statements and proofs of Lemmas \ref{pkcondtion} and \ref{pkrule} below are very similar to Proposition $7$ and Lemma $8$ in \cite{malmskog:article}, respectively, with variations resulting from our focus on odd primes $p$ instead of only $p=2$.

\begin{lemma}\label{pkcondtion}
Let $q$ be a power of $p$ an odd prime, and let $\alpha,\beta\in\mathbb{F}_{q^2}$. Then $P_{k+1}=-\alpha^qP_k$ if and only if $deg_{\alpha,\beta}(t^k)<q$.
\end{lemma}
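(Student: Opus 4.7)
The plan is to derive a closed-form expression for the coefficient of $t^q$ in $\overline{t^k}$ (expanded in the basis $1,t,\ldots,t^q$ of $\mathbb{F}_{q^2}[t]/(p(t))$) in terms of the power sums $P_k$ and $P_{k+1}$, and to read the equivalence directly off of that formula. Observe first that $\deg_{\alpha,\beta}(t^k)<q$ is exactly the statement that this coefficient of $t^q$ vanishes, so the lemma reduces to computing it. Under the standing assumption that $L_{\alpha,\beta}$ meets $\X$ in $q+1$ points, $p(t)$ is separable with distinct roots $\sigma_0,\ldots,\sigma_q\in\mathbb{F}_{q^2}$, so Lagrange interpolation gives
\[
\overline{t^k}(t)=\sum_{i=0}^{q}\sigma_i^k\,\ell_i(t),\qquad \ell_i(t)=\frac{p(t)/(t-\sigma_i)}{p'(\sigma_i)}.
\]
Since $p(t)$ is monic of degree $q+1$, each quotient $p(t)/(t-\sigma_i)$ is monic of degree $q$, so $\ell_i(t)$ has leading coefficient $1/p'(\sigma_i)$, and the coefficient of $t^q$ in $\overline{t^k}$ equals $\sum_{i=0}^{q}\sigma_i^k/p'(\sigma_i)$.

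The next step is to simplify $1/p'(\sigma_i)$. Because $q=p^l$, we have $q\equiv 0$ and $q+1\equiv 1\pmod{p}$, so differentiating $p(t)=t^{q+1}+\alpha^q t^q+\alpha t+\gamma$ gives $p'(t)=t^q+\alpha$. Combining $p'(\sigma_i)=\sigma_i^q+\alpha$ with $p(\sigma_i)=0$ in the form $\sigma_i^q(\sigma_i+\alpha^q)=-\alpha\sigma_i-\gamma$ yields the key identity
\[
\sigma_i^q+\alpha=\frac{\alpha^{q+1}-\gamma}{\sigma_i+\alpha^q}.
\]
For this to make sense the denominators must be nonzero; this is where separability enters. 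If $\sigma_i=-\alpha^q$ were a root of $p$, substitution forces $\gamma=\alpha^{q+1}$, and then $p(t)=(t+\alpha^q)(t^q+\alpha)=(t+\alpha^q)(t-\tau)^q$ with $\tau^q=-\alpha$, contradicting the assumption of $q+1$ distinct roots. The same computation shows $\alpha^{q+1}\neq\gamma$ throughout.

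Substituting $1/p'(\sigma_i)=(\sigma_i+\alpha^q)/(\alpha^{q+1}-\gamma)$ back into the Lagrange expression,
\[
[\text{coef of }t^q]\,\overline{t^k}=\frac{1}{\alpha^{q+1}-\gamma}\sum_{i=0}^{q}\sigma_i^k(\sigma_i+\alpha^q)=\frac{P_{k+1}+\alpha^q P_k}{\alpha^{q+1}-\gamma}.
\]
Since $\alpha^{q+1}-\gamma\neq 0$, this coefficient vanishes if and only if $P_{k+1}+\alpha^q P_k=0$, i.e.\ $P_{k+1}=-\alpha^q P_k$, which is exactly the claimed equivalence. The main technical hurdle will be the short separability argument that simultaneously rules out $\sigma_i=-\alpha^q$ and $\gamma=\alpha^{q+1}$; once that is in place, the rest is a direct manipulation using Lagrange interpolation and the explicit form of $p'(t)$ in characteristic $p$.
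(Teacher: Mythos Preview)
Your proof is correct and follows essentially the same approach as the paper: Lagrange interpolation to isolate the $t^q$-coefficient of $\overline{t^k}$, the computation $p'(t)=t^q+\alpha$, the identity $(\sigma_i^q+\alpha)(\sigma_i+\alpha^q)=\alpha^{q+1}-\gamma$, and the resulting coefficient $(P_{k+1}+\alpha^q P_k)/(\alpha^{q+1}-\gamma)$. The only difference is that you make explicit the separability argument ruling out $\sigma_i=-\alpha^q$ and $\gamma=\alpha^{q+1}$, which the paper leaves implicit under its standing assumption that the line meets $\X$ in $q+1$ points.
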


\begin{proof}
Divide $t^k$ by $p(t)$ to yield $t^k=g_k(t)p(t)+\bar{g}_k(t)$ for some polynomial $\bar{g}_k(t)$ of degree at most $q$. We will show that $\deg(\bar{g}_k(t))<q$ exactly when $P_{k+1}=-\alpha^qP_k$.

We may attain $q+1$ values of $\bar{g}_k(t)$ by noting that $\bar{g}_k(\sigma_i)=\sigma_i^k$ for $0\leq i\leq q$. This is true because each $\sigma_i$ is a root of $p(t)$.  Since $\bar{g}_k$ has degree less than $q$, Lagrange interpolation yields
\begin{equation}\label{laginterp}
    \bar{g}_k(t)=\sum_{i=0}^k\sigma_i^k\prod_{j\neq i}\left(\frac{t-\sigma_j}{\sigma_i-\sigma_j}\right)=\left(\sum_{i=0}^q\sigma_i^k\prod_{j\neq i}\frac{1}{\sigma_i-\sigma_j}\right)t^q+r(t)
\end{equation}
where $\deg(r(t))<q$. Since we are only concerned with checking when the degree of the whole polynomial is less than $q$, it is enough to check which conditions ensure that the coefficient of $t^q$ is $0$. We will use an identity arising from the derivative of $p(t)$. Starting with
$$p(t)=t^{q+1}+\alpha^qt^q+\alpha t+\gamma=(t-\sigma_0)\cdots(t-\sigma_q)$$
we can take the derivative of both sides and get
$$p'(t)=t^q+\alpha=\sum_{i=0}^q\prod_{j\neq i}(t-\sigma_j).$$
Replacing $t$ with $\sigma_i$ yields
\begin{equation}\label{plugin}
    p'(\sigma_i)=\sigma_i^q+\alpha=\prod_{j\neq i}(\sigma_i-\sigma_j).
\end{equation}
Because $\sigma_i$ is a root of $p(t)$ then $\sigma_i^{q+1}+\alpha^q\sigma_i^q+\alpha\sigma_i=-\gamma$.  We add both sides to $\alpha^{q+1}$ and factor to get $(\sigma_i^q+\alpha)(\sigma_i+\alpha^q)=\alpha^{q+1}-\gamma$.  Therefore 
\[\prod_{j\neq i}(\sigma_i-\sigma_j)=\frac{\alpha^{q+1}-\gamma}{\alpha^q+\sigma_i}.\]
Using equation \ref{laginterp} we can calculate the coefficient of $t^q$ in $\bar{g}_k(t)$ to be $$\sum_{i=1}^q\frac{\sigma_i^k(\alpha^q+\sigma_i)}{\alpha^{q+1}-\gamma}=\frac{\alpha^qP_k+P_{k+1}}{\alpha^{q+1}-\gamma}$$
which is equal to zero exactly when $P_{k+1}=-\alpha^qP_k$. Thus, this is our condition for when $deg_{\alpha,\beta}(g)<q$.
\end{proof}

Now that we have a sufficient condition that gives powers of $t$ that reduce sufficiently modulo $p(t)$, we need to find a condition on $k$ that will yield $P_{k+1}=-\alpha^qP_k$. We will find a few patterns to assist.

\begin{lemma}\label{pkrule}
Let $q$ be a power of an odd prime $p$. For $0\leq k<q$, $P_k=(-1)^k\alpha^{qk}$ and $P_{kq}=(-1)^k\alpha^k$.
\end{lemma}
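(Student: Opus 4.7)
The plan is to combine Newton's identities for power sums with the Frobenius endomorphism. First I would read off the elementary symmetric polynomials $e_j = e_j(\sigma_0, \ldots, \sigma_q)$ from Vieta's formulas. Expanding
\[
p(t) = \prod_{i=0}^{q}(t-\sigma_i) = \sum_{j=0}^{q+1}(-1)^j e_j\, t^{q+1-j}
\]
and matching coefficients with $t^{q+1}+\alpha^q t^q + \alpha t + \gamma$, while using that $q$ is odd (so $(-1)^q = -1$ and $(-1)^{q+1} = 1$), yields $e_1 = -\alpha^q$, $e_q = -\alpha$, $e_{q+1} = \gamma$, and, crucially, $e_j = 0$ for every $2 \leq j \leq q-1$.

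For the first identity I would apply Newton's identity
\[
P_k = e_1 P_{k-1} - e_2 P_{k-2} + \cdots + (-1)^{k-1} k\, e_k,
\]
which is valid for $1 \leq k \leq q+1$. The base case $P_0 = q+1 = 1$ holds in $\mathbb{F}_{q^2}$ because the characteristic $p$ divides $q$. For $1 \leq k \leq q-1$, every $e_j$ with $2 \leq j \leq k$ vanishes, so the recursion collapses to $P_k = e_1 P_{k-1} = -\alpha^q P_{k-1}$. A one-line induction then gives $P_k = (-1)^k \alpha^{qk}$ for all $0 \leq k \leq q-1$.

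For the second identity I would invoke the Frobenius endomorphism $x \mapsto x^q$, which is a ring homomorphism because $q = p^l$. Thus
\[
P_{qk} = \sum_{i=0}^{q}\sigma_i^{qk} = \Bigl(\sum_{i=0}^{q}\sigma_i^k\Bigr)^{\!q} = P_k^{\,q}.
\]
Substituting the first formula and using $(-1)^q = -1$ (since $q$ is odd) together with $\alpha^{q^2} = \alpha$ (since $\alpha \in \mathbb{F}_{q^2}$) gives
\[
P_{qk} = \bigl((-1)^k \alpha^{qk}\bigr)^q = (-1)^{qk} \alpha^{q^2 k} = (-1)^k \alpha^k,
\]
as claimed.

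The main obstacle is not really conceptual but bookkeeping: one must carefully track the signs coming from $q$ being odd, since this is precisely where the present odd-$p$ argument diverges from the characteristic-$2$ version in \cite{malmskog:article}, and one must verify that the coefficients of $p(t)$ in the range $t^2, \ldots, t^{q-1}$ genuinely force every elementary symmetric polynomial of intermediate degree to vanish. The Frobenius shortcut is the clean payoff that avoids a second, longer iteration of Newton's identities to obtain the $P_{qk}$ directly.
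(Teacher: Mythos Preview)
Your argument is correct. The second half, using the $q$-power Frobenius together with $\alpha^{q^2}=\alpha$, is exactly what the paper does.

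For the first half, however, you take a different route than the paper. The paper does not invoke Newton's identities or Vieta at all; instead it appeals to Lemma~\ref{pkcondtion} in the trivial direction: for $1\le k<q$ the monomial $t^{k-1}$ already has degree $k-1<q$, so $\deg_{\alpha,\beta}(t^{k-1})<q$ automatically, and the equivalence in Lemma~\ref{pkcondtion} then forces $P_k=-\alpha^q P_{k-1}$. Your approach reads off $e_1=-\alpha^q$ and $e_2=\cdots=e_{q-1}=0$ from the explicit coefficients of $p_{\alpha,\beta}(t)$ and then collapses Newton's identity to the same one-term recursion. What this buys you is a self-contained proof that does not rely on the Lagrange-interpolation machinery behind Lemma~\ref{pkcondtion}; what the paper's route buys is economy, since Lemma~\ref{pkcondtion} has to be proved anyway for the rest of the argument, and reusing it here costs nothing. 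Either way the induction and the sign bookkeeping are identical.
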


\begin{proof}
The fact that we are working in the finite field $\mathbb{F}_{q^2}$ implies $P_0=q+1=1\in\mathbb{F}_{q^2}$. Let $1\leq k<q$. We get $P_k=-\alpha^qP_{k-1}$ from Lemma \ref{pkcondtion}. By induction, we get that $P_k=(-1)^k\alpha^{qk}$. Again, because we are working over $\mathbb{F}_{q^2}$, we can use the rules of finite field arithmetic to find
$$P_{kq}=\sum_{i=0}^q\sigma_i^{qk}=\left(\sum_{i=0}^q\sigma_i^k\right)^q=((-1)^k\alpha^{qk})^q=(-1)^ k\alpha^k.$$
Therefore, $P_k=(-1)^k\alpha^{qk}$ and $P_{kq}=(-1)^k\alpha^k$.
\end{proof}

Using Lemmas ~\ref{pkcondtion} and ~\ref{pkrule} above, we can find a relationship between the elements of the following matrix, which will be useful for the upcoming theorems.
$$\begin{pmatrix}
P_0 & P_q & \cdots & P_{(q-1)q} \\
P_1 & P_{q+1} & \cdots & P_{(q-1)q+1} \\
\vdots & \vdots & & \vdots \\
P_{q-1} & P_{2q-1} & \cdots & P_{q^2-1}
\end{pmatrix}.$$
For a root $\sigma$ of $p(t)=t^{q+1}+\alpha^qt^q+\alpha t+\gamma$, we have $-\sigma^{q+1}=\alpha^q\sigma^q+\alpha\sigma+\gamma$. By multiplying both sides of the equation by $\sigma^{k-q-1}$ we get $-\sigma^k=\alpha^q\sigma^{k-1}+\alpha\sigma^{k-q}+\gamma\sigma^{k-q-1}$. Thus, summing over $0\leq i\leq q$, we obtain that the values of $P_k$ satisfy the recurrence relation
\begin{equation}\label{relation}
    P_k=-\alpha^qP_{k-1}-\alpha P_{k-q}-\gamma P_{k-q-1}.
\end{equation}
Based on this formula, the $(i,j)$ entry of the matrix is determined by the $(i-1,j)$, $(i,j-1)$, and $(i-1,j-1)$ entries of the matrix. As a result, the entire matrix is determined by the first row and first column of the matrix. Therefore, every $2\times 2$ submatrix $M$ must satisfy the recurrence relation
\begin{equation}\label{blockrelation}
    M_{22}=-\alpha^qM_{12}-\alpha M_{21}-\gamma M_{11}.
\end{equation}
The next step of the proof will be to show that the above matrix can be written as a product of a specific product of matrices, using the fact that it is sufficient to show that the first row and column are equal and every submatrix satisfies the recurrence relation. The following definitions will provide the tools for defining the product.

\begin{definition}
Let $A=[a_{ij}]$ be an $r\times s$ matrix and $B=[b_{ij}]$ an $m_1\times m_2$ matrix. The Kronecker product of $A$ and $B$ is the $rm_1\times sm_2$ matrix that can be expressed in block form as $$A\otimes B=\begin{pmatrix}
a_{11}B & a_{12}B & \cdots & a_{1s}B \\
a_{21}B & a_{22}B & \cdots a_{2s}B \\
\vdots & \vdots & & \vdots \\
a_{r1}B & a_{r2}B & \cdots & a{rs}B
\end{pmatrix}$$
\end{definition}

Next, we describe the matrices that make up the product. Consider a $p\times p$ matrix $B$ where every $2\times 2$ submatrix satisfies the property in (\ref{blockrelation}). Let the first row of the matrix be $\{1,-\alpha,\alpha^2,-\alpha^3,\ldots,\alpha^{p-1}\}$ and the first column be $\{1,-\alpha^{1q},\alpha^{2q},-\alpha^{3q},\ldots,\alpha^{q(p-1)}\}$. We can use this information to find every other element in the matrix.

\begin{lemma}\label{bformula}
The entry of the matrix $B$ in row $i$ and column $j$ has the form 
\begin{equation}\label{ijformula}
    B_{i,j}=\sum_{n=0}^{\min(i,j)-1}(-1)^{i+j-n}{i-1\choose n}{i+j-n-2 \choose i-1}\alpha^{(i-1-n)q+j-1-n}\gamma^n.
\end{equation}
\end{lemma}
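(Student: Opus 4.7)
The plan is to prove the formula by induction on $i+j$. The matrix $B$ is uniquely determined by its first row, its first column, and the recurrence (\ref{blockrelation}) applied to every $2\times 2$ submatrix, which after relabeling rows and columns reads
\[
B_{i,j} = -\alpha^q B_{i-1,j} - \alpha B_{i,j-1} - \gamma B_{i-1,j-1}.
\]
So it suffices to check that the right-hand side of (\ref{ijformula}) matches the prescribed first row and column and satisfies this recurrence.

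For the base cases, I would note that when $i=1$ or $j=1$, only the $n=0$ summand in (\ref{ijformula}) survives the standard binomial convention $\binom{m}{k}=0$ for $k<0$ or $k>m$. A direct evaluation then gives $B_{1,j}=(-1)^{j-1}\alpha^{j-1}=(-\alpha)^{j-1}$ and $B_{i,1}=(-1)^{i-1}\alpha^{(i-1)q}=(-\alpha^q)^{i-1}$, which agree with the prescribed first row and column.

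For the inductive step, I would plug the inductive formulas for $B_{i-1,j}$, $B_{i,j-1}$, and $B_{i-1,j-1}$ into the recurrence above and collect terms of the form $\alpha^{(i-1-n)q + j-1-n}\gamma^n$. The contributions from $-\alpha^q B_{i-1,j}$ and $-\alpha B_{i,j-1}$ already sit at summation index $n$, while $-\gamma B_{i-1,j-1}$ contributes after the index shift $n \mapsto n-1$ to absorb the extra factor of $\gamma$. Tracking the signs $(-1)^{i+j-n}$ and the three exponent alignments carefully reduces the claim, for each $n$, to the binomial identity
\[
\binom{i-1}{n}\binom{i+j-n-2}{i-1} = \binom{i-2}{n}\binom{i+j-n-3}{i-2} + \binom{i-1}{n}\binom{i+j-n-3}{i-1} + \binom{i-2}{n-1}\binom{i+j-n-3}{i-2}.
\]
This bookkeeping, rather than any deep idea, is the main obstacle, since a small error in sign or exponent shift produces a wrong identity.

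Finally, I would verify this identity by two applications of Pascal's rule. Grouping the first and third summands on the right yields $\binom{i+j-n-3}{i-2}\bigl(\binom{i-2}{n}+\binom{i-2}{n-1}\bigr) = \binom{i+j-n-3}{i-2}\binom{i-1}{n}$, and combining this with the middle summand yields $\binom{i-1}{n}\bigl(\binom{i+j-n-3}{i-2}+\binom{i+j-n-3}{i-1}\bigr) = \binom{i-1}{n}\binom{i+j-n-2}{i-1}$, which is exactly the left-hand side. The stated summation range $0 \le n \le \min(i,j)-1$ is enforced automatically: for $n > i-1$ the factor $\binom{i-1}{n}$ vanishes, and for $n > j-1$ the factor $\binom{i+j-n-2}{i-1}$ vanishes, so no additional boundary analysis is needed.
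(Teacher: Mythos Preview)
Your proof is correct and follows the same overall strategy as the paper: verify the formula on the first row and column, then induct via the recurrence $B_{i,j}=-\alpha^qB_{i-1,j}-\alpha B_{i,j-1}-\gamma B_{i-1,j-1}$, reducing the inductive step to a binomial identity.

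Where you differ is in execution. The paper splits into the cases $c\ge d$ and $c<d$ to track the upper limit of summation explicitly, and then simplifies the combined coefficient $X_n$ by rewriting each binomial as a rational multiple of a neighboring one and cancelling fractions. You instead invoke the standard convention that $\binom{m}{k}$ vanishes outside its natural range, which makes the summation bounds self-enforcing and eliminates the case split entirely; and you dispatch the binomial identity with two direct applications of Pascal's rule rather than the ratio manipulations. Both arguments are equivalent in substance, but yours is tidier and avoids the somewhat delicate boundary bookkeeping the paper carries out.
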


\begin{proof}
We will prove this formula by induction. First, we can easily verify that the first row and first column satisfy this formula. Now, for some $c,d$ integers with $1\leq c< p$ and $1\leq d < p$ assume that entries $B_{c,d}$, $B_{c+1,d}$, and $B_{c,d+1}$ satisfy the formula. We want to show that these imply that $B_{c+1,d+1}$ satisfies the formula. We can calculate the $B_{c+1,d+1}$ entry by applying inductive hypothesis and the recurrence relation (\ref{blockrelation}), yielding

\begin{align*}
    B_{c+1,d+1}&=-\alpha^q\sum_{n=0}^{\min(c-1,d)}(-1)^{c+d+1-n}{c-1\choose n}{c+d-n-1 \choose c-1}\alpha^{(c-1-n)q+d-n}\gamma^n \\
    &\;\;\;-\alpha\sum_{n=0}^{\min(c,d-1)}(-1)^{c+d+1-n}{c\choose n}{c+d-n-1 \choose c}\alpha^{(c-n)q+d-1-n}\gamma^n \\
    & \;\;\;-\gamma\sum_{n=0}^{\min(c-1,d-1)}(-1)^{c+d-n}{c-1\choose n}{c+d-n-2 \choose c-1}\alpha^{(c-1-n)q+d-1-n}\gamma^n. \\
\end{align*}
We reindex the third sum, yielding
\[-\gamma\sum_{n=1}^{\min(c,d)}(-1)^{c+d-n+1}{c-1\choose n-1}{c+d-n-1 \choose c-1}\alpha^{(c-n)q+d-n}\gamma^{n-1}.\]
We then simplify by combining the summations and bringing the negative signs into the sum, increasing the power of (-1). Let $$X_n={c-1\choose n}{c+d-n-1\choose c-1}+{c\choose n}{c+d-n-1\choose c}+{c-1\choose n-1}{c+d-n-1\choose c-1}.$$  This can be simplified differently depending on whether $c>d$, $c<d$, or $c=d$. First, consider that $c\geq d$.
Then substituting the value of $X$ we get
\begin{align*}
    B_{c+1,d+1}&=\sum_{n=1}^{d-1}(-1)^{c+d-n+2}X_n\alpha^{(c-n)q+d-n}\gamma^n \\
    &+(-1)^{c+d+2}\left({c-1\choose 0}{c+d-1\choose c-1}+{c\choose 0}{c+d-1\choose c}\right)\alpha^{cq+d}\gamma^0 \\
    &+(-1)^{c+2}{c-1\choose d}{c-1\choose c-1}\alpha^{(c-d)q}\gamma^d \\
    &+(-1)^{c+2}{c-1\choose d-1}{c-1\choose c-1}\alpha^{(c-d)q}\gamma^d \\
    &=\sum_{n=0}^{c}(-1)^{c+d-n+2}X_n\alpha^{(c-n)q+d-n}\gamma^n.
\end{align*}

Now assume that $c<d$ and use the same definition for $X_n$.
\begin{align*}
    B_{c+1,d+1}&=\sum_{n=1}^{c-1}(-1)^{c+d-n+2}X_n\alpha^{(c-n)q+d-n}\gamma^n \\
    &+(-1)^{c+d+2}\left({c-1\choose 0}{c+d-1\choose c-1}+{c\choose 0}{c+d-1\choose c}\right)\alpha^{cq+d}\gamma^0 \\
    &+(-1)^{d+2}{c\choose c}{d-1\choose c}\alpha^{d-c}\gamma^c \\
    &+(-1)^{d+2}{c-1\choose c-1}{d-1\choose c-1}\alpha^{d-c}\gamma^c \\
    &=\sum_{n=0}^{c}(-1)^{c+d-n+2}X_n\alpha^{(c-n)q+d-n}\gamma^n.
\end{align*}
Both cases result in the simplification
$$B_{c+1,d+1}=\sum_{n=0}^{\min(c,d)}(-1)^{c+d-n+2}X_n\alpha^{(c-n)q+d-n}\gamma^n.$$
Now that we have simplified the expression into one summation, we can simplify the value of $X_n$. Using known identities of binomial coefficients, the binomial coefficients can be rewritten as follows:
$${c+d-1-n\choose c-1}=\frac{c}{c+d-n}{c+d-n\choose c},$$
$${c+d-1-n\choose c}=\frac{d-n}{c+d-n}{c+d-n\choose c},$$
$${c-1\choose n-1}=\frac{n}{c}{c\choose n},$$
$${c-1\choose n}=\frac{c-n}{c}{c\choose n}.$$
Now we will substitute these identities into the expression for $X_n$.
\begin{align*}
    X_n&={c-1\choose n}{c+d-1-n\choose c-1}+{c\choose n}{c+d-1-n\choose c}+{c-1\choose n-1}{c+d-1-n\choose c-1} \\
    &={c\choose n}{c+d-n\choose c}\left(\frac{c-n}{c}\cdot\frac{c}{c+d-n}+\frac{d-n}{c+d-n}+\frac{n}{c}\cdot\frac{c}{c+d-n}\right) \\
    &={c\choose n}{c+d-n\choose c}\left(\frac{c+d-n}{c+d-n}\right) \\
    &={c\choose n}{c+d-n\choose c}.
\end{align*}
Plugging in the simplified value of $X_n$ into the summation for $B_{c+1,d+1}$ yields $$B_{c+1,d+1}=\sum_{n=0}^{d}(-1)^{c+d-n+2}{c\choose n}{c+d-n\choose c}\alpha^{(c-n)q+d-n}\gamma^n,$$
which satisfies the formula. By induction, every entry in $B$ satisfies ~\ref{ijformula}, and the proof is complete.
\end{proof}
The formula for each term allows us to find the exact values of the last row and last column.

\begin{lemma}\label{lastrowcolumn}
The last row of matrix $B$ is $B_{pj}=(-1)^{j-1}\alpha^{(p-j)q}\gamma^{j-1}$ and the last column is $B_{ip}=(-1)^{p-1}\alpha^{p-i}\gamma^{i-1}$.
\end{lemma}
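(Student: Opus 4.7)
The plan is to substitute $i=p$ (for the last row) or $j=p$ (for the last column) into the formula \eqref{ijformula} of Lemma \ref{bformula} and observe that, because we work in characteristic $p$, all but one term of each sum vanishes modulo $p$. The main tool will be Lucas's theorem on binomial coefficients modulo a prime.

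For the last row, setting $i=p$ gives
\[B_{p,j}=\sum_{n=0}^{j-1}(-1)^{p+j-n}\binom{p-1}{n}\binom{p+j-n-2}{p-1}\alpha^{(p-1-n)q+j-1-n}\gamma^{n}.\]
I would first argue that the factor $\binom{p+j-n-2}{p-1}$ is zero modulo $p$ for every $n\neq j-1$. Since $1\le j\le p$ and $0\le n\le j-1$, we have $p-1\le p+j-n-2\le 2p-3$. When $n=j-1$ the argument equals $p-1$ and the coefficient is $1$; for $n<j-1$, one has $p\le p+j-n-2<2p$, and writing $p+j-n-2=1\cdot p+(j-n-2)$ with $0\le j-n-2\le p-2$, Lucas's theorem yields $\binom{p+j-n-2}{p-1}\equiv\binom{1}{0}\binom{j-n-2}{p-1}\equiv 0\pmod p$. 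Thus only the $n=j-1$ term survives. Simplifying it using $\binom{p-1}{j-1}\equiv(-1)^{j-1}\pmod p$ together with $(-1)^{p+1}=1$ (since $p$ is odd) will recover the stated formula $B_{p,j}=(-1)^{j-1}\alpha^{(p-j)q}\gamma^{j-1}$.

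The last column is handled by an entirely analogous argument: specialize $j=p$ in \eqref{ijformula} and apply Lucas's theorem to the inner binomial coefficient $\binom{i+p-n-2}{i-1}$, which again forces all terms with $n\neq i-1$ to vanish modulo $p$, leaving a single surviving term that simplifies via $\binom{p-1}{i-1}\equiv(-1)^{i-1}\pmod p$.

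I expect the main obstacle to be nothing deeper than careful bookkeeping: correctly tracking the exponents of $\alpha$ and $\gamma$, the sign parities coming out of $(-1)^{p+1}$ and $\binom{p-1}{k}\equiv(-1)^k\pmod p$, and verifying the boundary behaviour when $j-n-2=0$ or $i-n-2=0$. Once the vanishing of every off-diagonal binomial coefficient modulo $p$ is recognised, the rest of the computation reduces to a single substitution in each case.
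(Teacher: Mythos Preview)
Your proposal is correct and follows essentially the same route as the paper: substitute $i=p$ or $j=p$ into formula~\eqref{ijformula}, show that all but the top term of the sum vanishes modulo $p$, and simplify the surviving term using $\binom{p-1}{k}\equiv(-1)^k\pmod p$. The only cosmetic difference is that you invoke Lucas's theorem uniformly to kill the unwanted binomial coefficients, whereas the paper argues divisibility more directly (via the identity $\binom{p+j-n-2}{p-1}=\frac{p}{j-n-1}\binom{p+j-n-2}{p}$ for the last row and a factorial count for the last column); these are equivalent observations. One small caution: when you carry out the last-column computation you will obtain $(-1)^{i-1}\alpha^{p-i}\gamma^{i-1}$, which is what the paper's own argument actually produces as well---the exponent $p-1$ in the displayed statement appears to be a typo for $i-1$.
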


\begin{proof}
The last row of the matrix is given by $$B_{pj}=\sum_{n=0}^{j-1}(-1)^{p+j-n}{p-1\choose n}{p+j-n-2\choose p-1}\alpha^{(p-1-n)q+j-1-n}\gamma^n.$$
By binomial coefficient identities, ${p-1\choose n}\equiv(-1)^n\mod{p}$. Also by binomial coefficient identities,
$${p+j-n-2\choose p-1}=\frac{p}{j-n-1}{p+j-n-2\choose p}.$$
This implies that ${p+j-n-2\choose p-1}\equiv 0\mod{p}$ except when $n=j-1$. Therefore, the only terms that are left in the last row are the terms with $\gamma^{j-1}$. Thus, $B_{pj}=(-1)^{j-1}\alpha^{(p-j)q}\gamma^{j-1}$, proving the lemma for the last row.

The last column of the matrix is given by
$$B_{ip}=\sum_{n=0}^{i-1}(-1)^{p+j-n}{i-1\choose n}{i+p-n-2\choose i-1}\alpha^{(i-1-n)q+p-1-n}\gamma^n.$$
Since $${p+i-n-2\choose p-1}=\frac{(p+i-n-2)!}{(i-1)!(p-n-1)!},$$
then ${p+i-n-2\choose p-1}$ is divisible by $p$ as long as $i-n-2\geq0$. This is the case except when $n=i-1$, because $i-(i-1)-2=-1$. Therefore, all the terms of the entries in the last column become $0$ modulo $p$ except when $n=i-1$. Therefore, the formula for the last column of the matrix is $B_{ip}=(-1)^{p-1}\alpha^{p-i}\gamma^{i-1}$, completing the proof of the lemma.
\end{proof}

Now that we have defined one of the matrices, we can define the sequence of matrices that will be in the Kronecker Product.

\begin{definition}\label{bhdefinition}
Let $p$ be an odd prime.  For $h$ any integer, $1\leq h\leq l$, define $B_h$ to be the $p\times p$ matrix where the $(i,j)$ entry is
$$\left(\sum_{n=0}^{\min(i,j)-1}(-1)^{i+j-n}{i-1\choose n}{i+j-n-2 \choose i-1}\alpha^{(i-1-n)q+j-1-n}\gamma^n\right)^{p^{l-h}}.$$
\end{definition}

\begin{lemma}\label{mainmatrixlemma}
Assume $q=p^l$ for $p$ and odd prime and $l$ a natural number. Then
$$\begin{pmatrix}
P_0 & P_q & \cdots & P_{(q-1)q} \\
P_1 & P_{q+1} & \cdots & P_{(q-1)q+1} \\
\vdots & \vdots & & \vdots \\
P_{q-1} & P_{2q-1} & \cdots & P_{q^2-1}
\end{pmatrix}=B_1\otimes B_2\otimes\cdots\otimes B_l.
$$
\end{lemma}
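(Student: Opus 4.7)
My plan is to exploit the uniqueness fact noted in the discussion following equation (\ref{blockrelation}): the LHS matrix $M$ is determined by its first row $M_{1,j}=P_{(j-1)q}$, its first column $M_{i,1}=P_{i-1}$, and the recurrence $M_{i,j}=-\alpha^q M_{i-1,j}-\alpha M_{i,j-1}-\gamma M_{i-1,j-1}$ for $i,j\ge 2$. Accordingly, it suffices to show that $K:=B_1\otimes B_2 \otimes\cdots\otimes B_l$ satisfies these same three conditions.

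For the first row and column, I would expand the flat indices in base $p$: $i-1=\sum_{h=1}^{l}(i_h-1)p^{l-h}$ and $j-1=\sum_{h=1}^{l}(j_h-1)p^{l-h}$ with $1\le i_h,j_h\le p$, so that $K_{i,j}=\prod_{h=1}^{l}(B_h)_{i_h,j_h}$ by the definition of the Kronecker product. By Lemma \ref{bformula}, the first row and column of $B$ are $B_{1,j}=(-1)^{j+1}\alpha^{j-1}$ and $B_{i,1}=(-1)^{i+1}\alpha^{(i-1)q}$, so Definition \ref{bhdefinition} and the fact that $p$ (hence $p^{l-h}$) is odd give $(B_h)_{1,j_h}=(-1)^{j_h+1}\alpha^{(j_h-1)p^{l-h}}$ and $(B_h)_{i_h,1}=(-1)^{i_h+1}\alpha^{(i_h-1)qp^{l-h}}$. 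Taking the product over $h$ and applying the parity reduction $j-1\equiv\sum (j_h-1)\equiv \sum j_h +l\pmod 2$ (which holds because $p^{l-h}\equiv 1\pmod 2$), I get $K_{1,j}=(-1)^{j-1}\alpha^{j-1}$, which matches $P_{(j-1)q}$ by Lemma \ref{pkrule}. A symmetric computation handles the first column.

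Verifying the recurrence on $K$ is the heart of the proof, and the main obstacle. I plan a case analysis on how many trailing base-$p$ digits of $i$ and $j$ ``borrow'' when one subtracts 1. In the generic case $i_l,j_l\ge 2$, all four entries in the recurrence share the common factor $\prod_{h<l}(B_h)_{i_h,j_h}$, and after dividing it out the recurrence reduces to the recurrence on $B_l=B$ itself (with the intended coefficients $-\alpha^q,-\alpha,-\gamma$, since $B_l$ is raised to the $p^{l-l}=1$st power). When a borrow occurs at the last digit (say $i_l=1$, so $i-1$ has multi-index $(i_1,\ldots,i_{l-1}-1,p)$), the adjacent $K$-entries involve the last-column entries of $B_l$. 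Substituting the closed forms from Lemma \ref{lastrowcolumn}, the algebra shows that the $-\alpha K_{i,j-1}$ term alone reproduces $K_{i,j}$, while the $-\alpha^q K_{i-1,j}$ and $-\gamma K_{i-1,j-1}$ terms pair off with matching $\alpha$- and $\gamma$-exponents but opposite signs coming from the $(-1)^{j_l\pm 1}$ factors, and therefore cancel. A symmetric cancellation handles column borrows.

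The hardest bookkeeping occurs with simultaneous multi-level carries, which I expect to handle by induction on the number of carrying digits: at each level, the recurrence $(B_h)_{i+1,j+1}=-\alpha^{qp^{l-h}}(B_h)_{i,j+1}-\alpha^{p^{l-h}}(B_h)_{i+1,j}-\gamma^{p^{l-h}}(B_h)_{i,j}$ (obtained by applying Frobenius to the defining recurrence of $B$) controls interior moves, while Lemma \ref{lastrowcolumn} at that level produces the exact $\gamma$-power needed to continue the cancellation from the lower levels. Once the recurrence is verified for $K$, the uniqueness remark concludes the proof.
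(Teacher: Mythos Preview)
Your plan is correct and follows the same high-level skeleton as the paper: match the first row and first column of the two matrices, then verify that the Kronecker product $K$ satisfies the recurrence (\ref{blockrelation}), and invoke the uniqueness remark after (\ref{blockrelation}) to conclude. Your treatment of the first row and column via base-$p$ expansion of the indices is exactly what the paper does, just written out more explicitly.

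The one substantive organizational difference is in how the recurrence is verified. You propose to work directly with the full product $B_1\otimes\cdots\otimes B_l$ and case-split on the base-$p$ digits of $i,j$, inducting on the number of borrowing digits to handle multi-level carries. The paper instead inducts on the tensor factors: assuming $C_i=B_{i+1}\otimes\cdots\otimes B_l$ already satisfies (\ref{blockrelation}) (and has the first/last row/column structure of Lemma \ref{lastrowcolumn}), it shows that $B_i\otimes C_i$ does too, with a four-case analysis according to whether the $2\times 2$ block lies inside a single $C_i$-cell, straddles two horizontally adjacent cells, two vertically adjacent cells, or four cells. This buys exactly the simplification you flag as ``the hardest bookkeeping'': all multi-level carries are absorbed into the inductive hypothesis on $C_i$, so one never has to track more than a single boundary at a time. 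Your direct approach works---your single-borrow cancellation and your Frobenius-lifted recurrence $(B_h)_{i+1,j+1}=-\alpha^{qp^{l-h}}(B_h)_{i,j+1}-\alpha^{p^{l-h}}(B_h)_{i+1,j}-\gamma^{p^{l-h}}(B_h)_{i,j}$ are precisely the ingredients the paper uses in its cases 3/4 and case 2 respectively---but writing out the simultaneous multi-level case cleanly will effectively force you to rediscover the paper's level-by-level induction.
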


\begin{proof}
Denote the matrix on the left by $\Gamma_q$ and the matrix on the right side by $\Gamma_q'$. The first row of $\Gamma_q'$ is $(1,-\alpha,\alpha^2,-\alpha^3,\ldots,\alpha^{q-1})$ and the first column is $(1,-\alpha^q,\alpha^{2q},\ldots,\alpha^{(q-1)q})$. Using the definiton of the Kronecker product and Lemmas ~\ref{pkcondtion} and ~\ref{pkrule}, the first rows and first columns of $\Gamma_q$ and $\Gamma_q'$ are equal. Therefore, in order to show that $\Gamma_q=\Gamma_q'$, it is sufficient to show that every $2\times 2$ matrix inside of $\Gamma_q'$ satisfies (\ref{blockrelation}).

We will proceed by induction. First, we know that the matrix $B_l$ satisfies property \ref{blockrelation} by construction and by Lemma ~\ref{bformula}. Now let $i\geq1$ and assume that every $2\times 2$ block of the $p^{l-i}\times p^{l-i}$ matrix
$$C_{i}=B_{i+1}\otimes B_{i+2}\otimes \cdots\otimes B_l$$
satisfies (\ref{blockrelation}). To complete the inductive step, the goal is to show that every $2\times 2$ block of the matrix
$$B_i\otimes C_i=\left(\begin{tabular}{ c | c | c | c}
 $C_i$ & $(-\alpha)^{p^{l-i}}C_i$ & $\cdots$ & $(\alpha^{p-1})^{p^{l-i}}C_i$\\ 
 \hline
$(-\alpha^q)^{p^{l-i}}C_i$ & $(2\alpha^{q+1}-\gamma)^{p^{l-i}}C_i$ & $\cdots$ & $(-\gamma\alpha^{p-2})^{p^{l-i}}C_i$\\  
 \hline
 $\vdots$ & $\vdots$ &   & $\vdots$ \\
 \hline
 $(\alpha^{q(p-1)})^{p^{l-i}}C_i$ & $(-\gamma\alpha^{(p-2)q})^{p^{l-i}}C_i$ & $\cdots$ & $(\gamma^{p-1})^{p^{l-i}}C_i$
\end{tabular}\right)$$
also satisfies (\ref{blockrelation}). There are four cases for where a $2\times 2$ block may lie in the matrix above.
\begin{enumerate}
\item The block lies entirely in one of the $p^2$ cells.
\item The block intersects four different cells of the matrix.
\item The block intersects two horizontally adjacent cells.
\item The block intersects two vertically adjacent cells.
\end{enumerate}
Any $2\times 2$ block in $C_i$ satisfies the relation by the induction hypothesis. Therefore, any $2\times 2$ block $M$ of $B_i\otimes C_i$ in the first case will also satisfy the relation because multiplying by a constant will maintain the relation.

Note that by Lemma ~\ref{lastrowcolumn} and the definition of the Kronecker product, the first and last columns and rows of $C_i$ have the following structure:
\begin{align}\label{C_ifirstlast}
(C_i)_{1,j}=&(-\alpha)^{j-1},\\
(C_i)_{k,1}=&(-\alpha^q)^{k-1},\\
(C_i)_{p^{(l-i)},j}=&(-\alpha^q)^{p^{(l-i)}-j} \gamma^{j-1}\\
(C_i)_{k,p^{(l-i)}}=&(-\alpha)^{p^{(l-i)}-k}\gamma^{k-1}.
\end{align}
For case 2, let $\left(\begin{array}{cc} w & x\\ y & z\end{array}\right)$ be the $2\times 2$ submatrix of $B_i$ corresponding to the four cells of $B_i\otimes C_i$ that the block $M$ intersects.  We know that $w=\tilde{w}^{p^{(l-i)}}$, $x=\tilde{x}^{p^{(l-i)}}$, $y=\tilde{y}^{p^{(l-i)}}$, and $z=\tilde{z}^{p^{(l-i)}}$, where 
\[\tilde{z}=-\alpha^q\tilde{x}-\alpha\tilde{y}-\gamma\tilde{w}.\]
That means
\begin{equation}\label{eq:case2recur}
    z=-(\alpha^q)^{p^{(l-i)}}x-\alpha^{p^{(l-i)}}y-\gamma^{p^{(l-i)}}w.
\end{equation}

Then the $2\times2$ block of $B_i\otimes C_i$ will be of the form
\[M=\left(
\begin{array}{c|c}
\gamma^{p^{(l-i)}-1}w & (-\alpha^q)^{(p^{(l-i)}-1)}x \\ \hline
(-\alpha)^{p^{(l-i)}-1}y & z \\
\end{array}
\right).
\]

Applying equation (\ref{eq:case2recur}), we then have that 
\begin{align*}
    -\alpha^qM_{1,2}-\alpha M_{2,1}-\alpha M_{1,1}&=-\alpha^q(-\alpha^q)^{(p^{(l-i)}-1)}x-\alpha(-\alpha)^{p^{(l-i)}-1}y-\gamma(\gamma^{p^{(l-i)}-1})w\\
    &=(-\alpha^q)^{p^{(l-i)}}x+(-\alpha)^{p^{(l-i)}}y-\gamma^{p^{(l-i)}}w\\
    &=-(\alpha^q)^{p^{(l-i)}}x-\alpha^{p^{(l-i)}}y-\gamma^{p^{(l-i)}}w\\
    &=z=M_{2,2}.
\end{align*} Therefore the $2\times 2$ block $M$ satisfies (\ref{blockrelation}) in case 2.

Now consider case 3, when the block $M$ intersects two horizontally adjacent cells. In this case, the block will have the form
$$M=\left(
\begin{array}{c|c}
(-\alpha)^{p^{(l-i)}-k+1}\gamma^{k-2}x &  (-\alpha^q)^{k-2}y\\ \hline
(-\alpha)^{p^{(l-i)}-k}\gamma^{k-1} x & (-\alpha^q)^{k-1}y \\
\end{array}
\right)$$
for constants $x$ and $y$ and some $2\leq k\leq p^{l-i}$. We see that
\begin{align*}
-\alpha^qM_{1,2}-\alpha M_{2,1}-\alpha M_{1,1}&=-\alpha^q(-\alpha^q)^{k-2}y-\alpha(-\alpha)^{p^{(l-i)}-k}\gamma^{k-1} -\gamma(-\alpha)^{p^{(l-i)}-k+1}\gamma^{k-2}x \\ &= (-\alpha^q)^{k+1}y=M_{2,2}  
\end{align*}
satisfying equation (\ref{blockrelation}). 

Finally, in case 4 of $M$ intersecting vertically adjacent cells, we have
$$M=\left(
\begin{array}{c|c}
(-\alpha^q)^{p^{(l-i)}-j+1} \gamma^{j-2}x & (-\alpha^q)^{p^{(l-i)}-j} \gamma^{j-1}x \\ \hline
(-\alpha)^{j-2}y & (-\alpha)^{j-1}y \\
\end{array}
\right)$$ for some constants $x,y$ and $2\leq j\leq p^{l-i}$.
We see that $M$ satisfies (\ref{blockrelation}) in this case because
\begin{align*}
-\alpha^qM_{1,2}-\alpha M_{2,1}-\alpha M_{1,1}&=-\alpha^q(-\alpha^q)^{p^{(l-i)}-j} \gamma^{j-1}x-\alpha(-\alpha)^{j-2}y -\gamma(-\alpha^q)^{p^{(l-i)}-j+1} \gamma^{j-2}x \\
&=(-\alpha)^{j-1}y = M_{2,2}.
\end{align*}
Therefore, every $2\times 2$ block of $B_i\otimes C_i$ satisfies (\ref{blockrelation}). By induction, every $2\times 2$ block in $\Gamma_q'$ must also satisfy (\ref{blockrelation}). Since the first rows and columns of $\Gamma_q$ and $\Gamma_q'$ are equal, and both satisfy the recurrence relation, then the matrices must be equal and $\Gamma_q=\Gamma_q'$.
\end{proof}

The matrix identity in Lemma \ref{mainmatrixlemma} can be used to find a sufficient condition for $k$ that satisfies $deg_{\alpha,\beta}(t^k)<q$.  The proof of this lemma is structurally identical to Theorem 10 in \cite{malmskog:article}, but uses the results proven here for odd primes.

\begin{lemma}\label{connectpieces}
Let $q=p^l$ for $p$ an odd prime. For an integer $k$, $0\leq k\leq q^2$, write $k=wq+z$ where for non-negative integers $z,q$ with $z<q$. For $\alpha,\beta\in\mathbb{F}_{q^2}$, if $w=0$ or there exists $1\leq i\leq l$ such that $w\equiv 0\mod{p^i}$ and $z\not\equiv -1\mod{p^i}$, then $deg_{\alpha,\beta}(t^k)<q$.
\end{lemma}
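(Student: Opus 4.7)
The plan is to reduce the claim, via Lemma~\ref{pkcondtion}, to the single identity $P_{k+1}=-\alpha^q P_k$, and then to verify that identity by reading off both sides from the Kronecker factorization $\Gamma_q=B_1\otimes\cdots\otimes B_l$ supplied by Lemma~\ref{mainmatrixlemma}. Writing $k=wq+z$ and expanding $z=\sum_{h=1}^l z_h p^{l-h}$, $w=\sum_{h=1}^l w_h p^{l-h}$ in base $p$, the Kronecker structure places $P_k$ at the entry indexed by $(z+1,w+1)$ of $\Gamma_q$, which factors as $\prod_{h=1}^l (B_h)_{z_h+1,\,w_h+1}$.

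I would first dispose of the case $w=0$ using Lemma~\ref{pkrule}. For $0\le k<q-1$ the explicit formula $P_k=(-1)^k\alpha^{qk}$ gives $P_{k+1}=-\alpha^q P_k$ immediately. The only boundary check is $k=q-1$: there $P_{q-1}=\alpha^{q(q-1)}$ (since $q-1$ is even because $q$ is odd) and $P_q=-\alpha$, and $-\alpha^q P_{q-1}=-\alpha^{q^2}=-\alpha=P_q$ using $\alpha^{q^2}=\alpha$ in $\mathbb{F}_{q^2}$.

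For the main case, pick any $i$ with $w\equiv 0\pmod{p^i}$ and $z\not\equiv -1\pmod{p^i}$. The first congruence forces $w_h=0$ for every $h>l-i$, so those factors are first-column entries of $B_h$. Specializing Lemma~\ref{bformula} to $j=1$ and applying Definition~\ref{bhdefinition}, each such entry is $((-\alpha^q)^{z_h})^{p^{l-h}}=(-\alpha^{qp^{l-h}})^{z_h}$, using that $p^{l-h}$ is odd. Multiplying across $h>l-i$ collapses to $(-1)^{s}\alpha^{q\tilde z}$, where $\tilde z=z\bmod p^i=\sum_{h>l-i}z_h p^{l-h}$ and $s=\sum_{h>l-i}z_h$ is the base-$p$ digit sum of $\tilde z$. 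Because $z\not\equiv -1\pmod{p^i}$, incrementing $z$ by one does not carry past position $l-i$, so every factor with $h\le l-i$ is unchanged and the product for $k+1$ differs from that for $k$ by precisely $(-1)^{s'-s}\alpha^{q}$, where $s'$ is the digit sum of $\tilde z+1$.

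The whole argument now turns on the parity of $s'-s$, and this is where I expect the subtlety (and the only genuine use of the odd-prime hypothesis) to sit. If incrementing $\tilde z$ produces exactly $j$ carries — that is, $j$ trailing base-$p$ digits flip from $p-1$ to $0$ and one further digit increases by $1$, with $j<i$ guaranteed by $z\not\equiv -1\pmod{p^i}$ — then $s'-s=1-j(p-1)$. Since $p$ is odd, $p-1$ is even, so $s'-s\equiv 1\pmod 2$ and the ratio is exactly $-\alpha^q$. This yields $P_{k+1}=-\alpha^q P_k$, and Lemma~\ref{pkcondtion} finishes the proof. Everything else is routine bookkeeping on top of the structural identity in Lemma~\ref{mainmatrixlemma}; the combinatorial heart of the lemma is this parity observation about digit sums modulo $2$.
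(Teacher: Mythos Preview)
Your proposal is correct and follows essentially the same route as the paper: reduce via Lemma~\ref{pkcondtion} to the identity $P_{k+1}=-\alpha^q P_k$, then read this off from the Kronecker factorization of Lemma~\ref{mainmatrixlemma} by observing that both $P_k$ and $P_{k+1}$ sit in the first column of the same $C_i$-block. Two minor points of over-complication: for $w=0$ the paper simply notes $\deg(t^k)=k<q$ outright, and for the main case your digit-sum parity argument, while correct, is unnecessary once you write the tail product as $\prod_{h>l-i}((-\alpha^q)^{z_h})^{p^{l-h}}=(-\alpha^q)^{\sum_{h>l-i} z_h p^{l-h}}=(-\alpha^q)^{\tilde z}$, from which the ratio $-\alpha^q$ is immediate without tracking carries; the paper packages this as the first-column formula $(C_i)_{k,1}=(-\alpha^q)^{k-1}$ from~(\ref{C_ifirstlast}).
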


\begin{proof}
    Suppose that $k$, $q$, and $z$ satisfy the conditions stated in the lemma where $k=wq+z$. By Lemma \ref{pkcondtion}, we just need to show that $P_{k+1}=-\alpha^qP_k$ in order to show that $deg_{\alpha,\beta}(t^k)<q$.

    When $w=0$, then $k=z<q$. This automatically gives that $\deg_{\alpha,\beta}(t^k)<q$ since $\deg(t^k)<q$.

    Then, say $w>0$ and that there exists an $i$ such that $w\equiv 0\mod{p^i}$ and $z\not\equiv -1\mod{p^i}$. Let
    $$A_i=B_1\otimes\cdots\otimes B_{l-i}\;\;\;\text{and}\;\;\; C_i=B_{l-i+1}\otimes\cdots\otimes B_l,$$
    where $B_h$ is as in Definition \ref{bhdefinition}. By Lemma \ref{mainmatrixlemma}, we have
    \[\begin{pmatrix}
P_0 & P_q & \cdots & P_{(q-1)q} \\
P_1 & P_{q+1} & \cdots & P_{(q-1)q+1} \\
\vdots & \vdots & & \vdots \\
P_{q-1} & P_{2q-1} & \cdots & P_{q^2-1}
\end{pmatrix}=\begin{pmatrix}
a_{11}C_i & a_{12}C_i & \cdots & a_{1s}C_i \\
a_{21}C_i & a_{22}C_i & \cdots & a_{2s}C_i \\
\vdots & \vdots & & \vdots \\
a_{s1}C_i & a_{s2}C_i & \cdots & a_{ss}C_i
\end{pmatrix}=A_i\otimes C_i,\]
where $s=p^{l-i}$.

First assume that $P_k$ lies in block $a_{cd}C_i$ for some $c,d\in\{1,\ldots,p^{l-i}\}$. Because $w\equiv 0\mod{p^i}$, then $P_k$ is in the first column of $a_{cd}C_i$. Also, since $z\not\equiv -1\mod{p^i}$ then $P_k$ is not in the last row of $a_{cd}C_i$. Therefore, $P_{k+1}$ must also be in the same block $a_{cd}C_i$. Since the first column of $C_i$ is as given in (\ref{C_ifirstlast}), we get that $P_{k+1}=-\alpha^qP_k$. Therefore by Lemma \ref{pkcondtion}, $\deg_{\alpha\beta}(t^k)<q$.
\end{proof}

\subsection{Counting the number of Good Monomials}

So far, we have proved that monomials of the form $M_{a,b}(x,y)=x^ay^b$ for $a\leq q-1$ and $b\leq q^2-1$ give rise to a linearly independent set of evaluation vectors, and have found a sufficient condition for when $deg_{\alpha,\beta}(t^k)<q$. Now, we want to count the monomials that fit the condition. The coefficients of powers of $t$ in $M_{a,b}(L_{\alpha,\beta}(t))$ can be found by expanding: 
\[M_{a,b}(L_{\alpha,\beta}(t))=M_{a,b}(\alpha t+\beta,t)=(\alpha t+\beta)^at^b=\sum_{j=0}^a{a\choose j}\alpha^j\beta^{a-j}t^{b+j}.\]
If $a+b<q$, then it is clear the $M_{a,b}$ is good. If $a+b\geq q$, there are two ways that $M_{a,b}$ can be good, as described in Section 3.2 of \cite{malmskog:article} and included here for completeness. The first mechanism is that all the terms could reduce to degree less than $q$ modulo $p_{\alpha,\beta}(t)$ without using finite field properties. Second, the coefficient in front of the term $t^q$ could reduce to 0 modulo $p$. Both of these contribute to good monomials.  

To understand when the binomial coefficients will vanish modulo $p$, we use Lucas' Theorem.

\begin{definition}
Let $c$ and $d$ be integers between $0$ and $p^k-1$ for prime $p$, with $c\leq d$ and let $p-ary(c)\in\{0,1,\ldots,p-1\}^k$ denote the p-ary expansion of $c$ (also for $d$). We say that $c$ lies in the $p$-shadow of $d$, denoted $c\leq_pd$ if every digit of $p-ary(c)$ is less than or equal to the corresponding digit in $p-ary(d)$.
\end{definition}

\begin{theorem}
(Lucas). Let $0\leq c \leq d$ be integers. Then ${d\choose c}$ is nonzero mod $p$ if and only if $c\leq_p d$, i.e. $c$ lies in the $p$-shadow of $d$.
\end{theorem}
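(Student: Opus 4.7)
The plan is to prove the full product form of Lucas' theorem, namely that if $d = \sum_{i=0}^{k} d_i p^i$ and $c = \sum_{i=0}^{k} c_i p^i$ are the base-$p$ expansions (padding the shorter expansion with zeros), then
\[\binom{d}{c} \equiv \prod_{i=0}^{k} \binom{d_i}{c_i} \pmod{p},\]
from which the ``nonzero iff $p$-shadow'' statement follows immediately. The central ingredient is the Frobenius / freshman's dream identity $(1+x)^p \equiv 1 + x^p \pmod{p}$, which holds because every interior binomial coefficient $\binom{p}{j}$ with $1 \le j \le p-1$ contains $p$ in its numerator $p!$ but not in its denominator $j!(p-j)!$. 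Iterating this identity (by raising both sides to the $p$-th power and applying freshman's dream again) yields $(1+x)^{p^i} \equiv 1 + x^{p^i} \pmod{p}$ for every $i \ge 0$.

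Next I would expand $(1+x)^d$ in two different ways inside $\mathbb{F}_p[x]$. The ordinary binomial theorem gives $(1+x)^d = \sum_{c=0}^{d}\binom{d}{c} x^c$, so the coefficient of $x^c$ is $\binom{d}{c} \bmod p$. On the other hand, using the base-$p$ expansion of $d$ and the iterated Frobenius identity,
\[(1+x)^d = \prod_{i=0}^{k} (1+x)^{d_i p^i} \equiv \prod_{i=0}^{k} (1 + x^{p^i})^{d_i} \pmod{p}.\]
Expanding the right-hand side via the binomial theorem in each factor produces a sum of monomials of the form $\prod_i \binom{d_i}{e_i} x^{\sum_i e_i p^i}$ with $0 \le e_i \le d_i \le p-1$. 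By the uniqueness of the base-$p$ expansion, the exponent $\sum_i e_i p^i$ equals $c$ if and only if $e_i = c_i$ for every $i$. Matching the coefficients of $x^c$ on the two sides then yields the product formula above.

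To conclude the ``nonzero iff shadow'' statement, observe that if some $c_i > d_i$ then $\binom{d_i}{c_i} = 0$, and thus $\binom{d}{c} \equiv 0 \pmod{p}$. Conversely, if $c_i \le d_i$ for every $i$, each factor $\binom{d_i}{c_i}$ is a positive integer whose numerator and denominator involve only factorials of integers strictly less than $p$, so no factor of $p$ can appear; hence each $\binom{d_i}{c_i}$ is nonzero mod $p$, and so is their product. This is exactly the condition $c \le_p d$. The main obstacle is purely bookkeeping: one must be careful to pad the base-$p$ expansion of $c$ so that it has the same length as that of $d$, and to verify that comparing coefficients of $x^c$ on both sides indeed isolates a single term on the right thanks to base-$p$ uniqueness. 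Once these details are in place, the proof is essentially a single application of the Frobenius identity.
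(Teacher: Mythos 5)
The paper states Lucas' theorem as a classical result without proof, so there is no in-paper argument to compare against. Your proof is correct and is the standard generating-function proof: expanding $(1+x)^d$ in $\mathbb{F}_p[x]$ via the iterated Frobenius identity $(1+x)^{p^i}\equiv 1+x^{p^i}$, matching the coefficient of $x^c$ using the uniqueness of base-$p$ expansions, and observing that a single-digit binomial coefficient $\binom{d_i}{c_i}$ with $d_i,c_i<p$ is nonzero mod $p$ exactly when $c_i\le d_i$. The bookkeeping you flag (padding the base-$p$ expansion of $c$, and verifying that exactly one product of digit-choices hits exponent $c$) is handled correctly, so the argument is complete.
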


With these tools, we now find some sufficient conditions on $a$ and $b$ for good $M_{a,b}$ and count the number of monomials that satisfy those conditions to get a lower bound on the rate of the Hermitian-lifted Code. Note that the properties in the following lemma do not cover all possible good monomials, but are plentiful enough to provide a positive lower bound. The following lemma and proof are identical to Claim 12 from \cite{malmskog:article}, with a general prime $p$ replacing every $2$ from the original version. We define $x_r$ to be the digit corresponding to $p^r$ of the $p$-ary expansion of $x$, or the digit $r$ positions from the right. We include this here for completeness.

\begin{lemma}\label{goodproperties}
\normalfont(\cite{malmskog:article}) Suppose that $a\leq q-1$ and $b\leq q^2-1$ satisfy the following:
\begin{enumerate}
    \item $b=wq+b'$ for some $w<q$ and some $b'<p^{l-1}$ so that $w\equiv 0\mod(p^i)$ for some $1\leq i\leq l$,
    \item $a<p^{l-1}$,
    \item there is some $0\leq s\leq i-1$ so that $a_s=b_s'=0$.\end{enumerate}
Then $M_{a,b}$ is good.
\end{lemma}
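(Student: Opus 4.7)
The plan is to reduce Lemma \ref{goodproperties} to Lemma \ref{connectpieces} applied termwise to the binomial expansion of $g_{a,b}(t)=(\alpha t+\beta)^a t^b$. Expanding gives
\[
g_{a,b}(t)=\sum_{j=0}^{a}\binom{a}{j}\alpha^{j}\beta^{a-j}\,t^{\,b+j},
\]
so $\bar g_{a,b}(t)$ is a linear combination of the reductions $\overline{t^{\,b+j}}$ weighted by $\binom{a}{j}\alpha^{j}\beta^{a-j}$. It therefore suffices to show that for every index $j\in\{0,1,\dots,a\}$ for which $\binom{a}{j}\not\equiv 0\pmod p$, the exponent $k=b+j$ satisfies the hypotheses of Lemma \ref{connectpieces}, i.e.\ $\deg_{\alpha,\beta}(t^{b+j})<q$. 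Once every surviving summand has degree $<q$, so does the sum, and $M_{a,b}$ is good.

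Next I would verify that $b+j$ has the shape needed by Lemma \ref{connectpieces}. Writing $b=wq+b'$ as in hypothesis (1), we get $b+j=wq+(b'+j)$. Since $b'<p^{l-1}$ by (1) and $j\le a<p^{l-1}$ by (2), we have $b'+j<2p^{l-1}\le p^{l}=q$, so $b'+j$ is the valid low component and the high component is still $w$. Hypothesis (1) already gives $w\equiv 0\pmod{p^{i}}$, so the only remaining point is to show $b'+j\not\equiv -1\pmod{p^{i}}$, equivalently that the base-$p$ digits of $b'+j$ in positions $0,1,\dots,i-1$ are not all equal to $p-1$.

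Here I would use Lucas' theorem together with hypothesis (3). Pick the index $s\in\{0,\dots,i-1\}$ guaranteed by (3), where $a_s=b'_s=0$. Lucas forces $j_s\le a_s=0$, i.e.\ $j_s=0$, for every surviving term. Then when adding $b'$ and $j$ base-$p$, the $s$-th digit of $b'+j$ equals the incoming carry from position $s-1$, which is either $0$ or $1$. Because $p$ is odd we have $p-1\ge 2$, so the $s$-th digit of $b'+j$ is strictly less than $p-1$. This single non-$(p-1)$ digit lying in positions $0,\dots,i-1$ rules out $b'+j\equiv -1\pmod{p^{i}}$, so Lemma \ref{connectpieces} applies and $\deg_{\alpha,\beta}(t^{b+j})<q$. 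Summing over $j$ gives $\deg_{\alpha,\beta}(g_{a,b})<q$ for every $\alpha,\beta\in\mathbb{F}_{q^{2}}$, completing the proof.

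The only delicate step is the base-$p$ carry argument; the rest is bookkeeping. It is also essentially the only place where the odd-prime assumption is used (for $p=2$ a carry of $1$ would equal $p-1$, which is exactly why the characteristic-two proof in \cite{malmskog:article} required a different formulation of hypothesis (3)). Verifying that $b'+j<q$ so that $w$ really is the high digit of $b+j$ is a small but essential check that the proposed argument keeps within the regime where Lemma \ref{connectpieces} speaks.
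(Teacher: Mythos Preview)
Your proof is correct and follows essentially the same approach as the paper: expand $(\alpha t+\beta)^a t^b$, discard terms with $\binom{a}{j}\equiv 0\pmod p$ via Lucas, and verify for each surviving exponent $k=wq+(b'+j)$ the hypotheses of Lemma~\ref{connectpieces}. The only difference is in the verification that $b'+j\not\equiv -1\pmod{p^i}$: the paper splits $b'$ and $j$ at position $s+1$ and bounds the residue explicitly, whereas your carry argument directly inspects the $s$-th digit; both are valid for odd $p$ and amount to the same idea.
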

\begin{proof}
    If $a,b$ satisfy the above, then 
    \begin{equation}
        M_{a,b}(L_{\alpha,\beta}(t))=\sum_{j=0}^a{a\choose j}\alpha^j\beta^{a-j}t^{b+j}=\sum_{j\leq_p a}{a\choose j}\alpha^j\beta^{a-j}t^{b+j}
    \end{equation} by Lucas' theorem.  For any term here where the binomial coefficient $a\choose j$ did not vanish modulo $p$, we have $j\leq a<p^{l-1}$ and $b_s=j_s=0$ for some $s<i$ as in conditions 2 and 3 above.  So each non-vanishing term involves $t^k$ where $k=wq+b'+j$ and $j\leq_p a$. By condition 1, we have $w\equiv 0\mod(p^i)$, and we now consider $b'+j \mod p^i$.  Write $b'=b''p^{s+1} +b'''$ and $j=j''p^{s+1}+j'''$, where $b''', j'''<p^s$.  Let $2^{s+1}(b''+j'')\equiv c \mod p^i$ with $0\leq c\leq p^i-1$.  We then have $c\leq p^i-p^{s+1}$ because $c$ is equivalent to a multiple of $p^{s+1}$.  So
    \[ b'+j\equiv c+b'''+j''' \mod p^i,\] where \[c+b'''+j'''< p^i-p^{s+1}+2p^s<p^i-p^{s+1}+p^{s+1}-1=p^i-1.\]  Thus $b'+j\not\equiv -1 \mod p^i$.  That means $k=wq+z$ where $w\equiv 0 \mod p^i$ and $z\not\equiv -1 \mod p^i$.  By Lemma \ref{connectpieces}, $\deg_{\alpha,\beta}(t^k)<q$ for all $\alpha,\beta$.  Therefore $\deg_{\alpha,\beta}(M_{a,b}(L_{\alpha,\beta}(t)))<q$, so $M_{a,b}$ is good.
    \end{proof}

We are now able to prove our main theorem:
\begin{theorem}\label{main}
   Suppose that $q=p^l$ where $p$ is an odd prime and $l\geq 2$. Then the rate of $\mathcal{C}_q$ is at least $$\frac{.469}{p^4(p-1)(p^3-p^2+1)}.$$
\end{theorem}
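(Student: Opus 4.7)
The plan is to lower-bound the rate by counting good monomials identified via Lemma~\ref{goodproperties} and then dividing by the code length $n=q^3$. By the linear-independence lemma cited from \cite{malmskog:article} at the start of Subsection~4.1, distinct monomials $M_{a,b}$ with $0\le a\le q-1$ and $0\le b\le q^2-1$ have linearly independent evaluation vectors on $\mathcal{H}_q(\mathbb{F}_{q^2})$, so every good monomial we exhibit genuinely increases the dimension of $\mathcal{C}_q$.

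I would organize the count over the triples $(a,w,b')$ satisfying Lemma~\ref{goodproperties} by the $p$-adic valuation $v_p(w)$. For a fixed $v_p(w)=j$ with $1\le j\le l-1$ there are $(p-1)p^{l-j-1}$ admissible choices of $w$, and the ``exists $i$'' clause of the lemma is equivalent to the pair $(a,b')$ sharing a zero digit somewhere in positions $0,\ldots,j-1$. Independence across digit positions gives that the number of pairs $(a,b')\in[0,p^{l-1})^2$ with no common zero in positions $0,\ldots,j-1$ is $(p^2-1)^j\,p^{2(l-1-j)}$, so the number with a common zero in those positions is $p^{2(l-1)}-(p^2-1)^j\,p^{2(l-1-j)}$. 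The case $w=0$ contributes all $p^{2(l-1)}$ pairs, since position $l-1$ of $a,b'<p^{l-1}$ is automatically $0$ and the condition is then verified with $i=l$.

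Summing these contributions produces a finite geometric sum in the ratio $r=(p^2-1)/p^3$. Using $1-r=(p^3-p^2+1)/p^3$, the total count simplifies to an expression of the form
\[
p^{3l-3}\cdot\frac{p+(p^3-p^2-p+1)\,r^{l-1}}{p^3-p^2+1},
\]
whose leading term after dividing by $q^3=p^{3l}$ already produces the characteristic denominator $p^3-p^2+1$ advertised by the theorem. The sharper numerical constant $0.469$ together with the additional factors $p^2(p-1)$ in the paper's denominator are obtained by treating the finite-$l$ correction conservatively (bounding the $r^{l-1}$ term at its worst value over $l\ge 2$) and by restricting to a subfamily of triples for which the count factors cleanly through the $p-1$ count of nonzero-valuation $w$.

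The main technical obstacle is twofold. First, there is bookkeeping around triples that satisfy the lemma's conditions for several values of $i$; grouping by $v_p(w)$ as above counts each triple exactly once, but producing the specific constant $0.469$ in the stated bound may require a more restrictive subcount to avoid claiming dimension contributions that are harder to verify are disjoint. Second, the algebraic manipulation converting the finite geometric sum into the precise form $\frac{0.469}{p^4(p-1)(p^3-p^2+1)}$ is delicate; verifying that the $l$-dependent correction factor attains its infimum at the smallest permissible $l=2$ is what ultimately makes the bound uniform in $l$, and this monotonicity check in $l$ is the step I expect to require the most care.
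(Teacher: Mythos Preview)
Your core approach is correct and in fact yields a \emph{stronger} bound than the one stated in the theorem; the confusion in your last two paragraphs is unnecessary. Your stratification by $j=v_p(w)$ partitions the triples $(a,w,b')$ satisfying Lemma~\ref{goodproperties} exactly once, your counts in each stratum are right, and your closed form
\[
N \;=\; p^{3l-3}\cdot\frac{p+(p^3-p^2-p+1)\,r^{\,l-1}}{p^3-p^2+1},\qquad r=\frac{p^2-1}{p^3},
\]
is correct (e.g.\ for $p=3$, $l=2$ it gives $N=11$). Since the correction term is positive, dividing by $q^3=p^{3l}$ already gives the uniform lower bound $\tfrac{1}{p^2(p^3-p^2+1)}$, which dominates the theorem's bound because $p^2(p-1)\ge 18>0.469$ for all odd primes $p$. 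So there is no need to ``restrict to a subfamily'' or to perform a delicate monotonicity check in $l$; you are done the moment you drop the $r^{\,l-1}$ term.

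The paper takes a genuinely different route: it organizes the count by the smallest position $s$ at which $a_s=b_s'=0$ and, for each such $s$, forces $w_0=\cdots=w_s=0$. This produces the sum $\sum_{s=0}^{l-1}(p^{2s}-(p^2-1)^s)\,p^{3l-3s-5}$, which after simplification gives the numerator $1-(p^3-p^2+1)p^{-l}+(p^3-p^2)r^{\,l}$ over the denominator $p^4(p-1)(p^3-p^2+1)$; the constant $0.469$ is the value of that numerator at its minimum $(p,l)=(3,2)$. Your decomposition by $v_p(w)$ is cleaner, avoids any ambiguity about double counting, and lands on a bound that is larger by a factor of roughly $p^2(p-1)$. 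Either approach proves the theorem; yours just proves more.
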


\begin{proof}
We will count the number of pairs $a,b$ that satisfy the sufficient conditions from Lemma \ref{goodproperties} for $M_{a,b}$ to be good. We iterate over all $s$, where we take $s$ to be the smallest index so that $a_s=b_s'=0$. For a given $s$, there are $p^{2s}-(p^2-1)^s$ ways to assign the bits $a_0,\ldots,a_{s-1}$ and $b_0',\ldots,b_{s-1}'$, since there are only $(p^2-1)^s$ ways to never have $a_r=b_r'=0$ for any $0\leq r\leq s-1$. Then there are $p^{2(l-s-2)}$ ways to assign the bits $a_{s+1},\ldots,a_{l-2},b_{s+1}',\ldots,b_{l-2}'$. Finally, there are $p^{l-s-1}$ ways to assign the bits $w_{s+1},\ldots,2_{l-1}$. We will choose $w_0=w_1=\ldots=w_s=0$, ensuring that $w\equiv 0\mod p^{s+1}$, specifically $w\equiv 0\mod{p^i}$ for some $i>s$. Thus, the total number of monomials meeting the description in Lemma \ref{goodproperties} when $l\geq 2$ is
\begin{align*}
    \sum_{s=0}^{l-1}&(p^{2s}-(p^2-1)^s)p^{2(l-s-2)}p^{l-s-1}  =\sum_{s=0}^{l-1}(p^{2s}-(p^2-1)^s)p^{2(l-s-2)+l-s-1} \\
    & =\sum_{s=0}^{l-1}(p^{2s}-(p^2-1)^s)p^{3l-3s-5} \\
    & = \frac{p^{3l}}{p^5}\sum_{s=0}^{l-1}(p^{2s}-(p^2-1)^s)p^{-3s} \\
    & =\frac{p^{3l}}{p^5}\sum_{s=0}^{l-1}\left(\frac{1}{p}^s-\left(\frac{p^2-1}{p^3}\right)^s\right) \\
    & = \frac{p^{3l}}{p^5}\left(\frac{1-(\frac{1}{p})^l}{1-\frac{1}{p}}-\frac{1-(\frac{p^2-1}{p^3})^l}{1-\frac{p^2-1}{p^3}}\right)\\
    & = \frac{p^{3l}}{p^5}\left(\frac{p}{p-1}\left(1-\left(\frac{1}{p}\right)^l\right)-\frac{p^3}{p^3-p^2+1}\left(1-\left(\frac{p^2-1}{p^3}\right)^l\left(\frac{1}{p}\right)\right)\right) \\
    & = \frac{p^{3l}}{p^5}\left(\frac{p}{p-1}-\frac{p}{p-1}\left(\frac{1}{p}\right)^l-\frac{p^3}{p^3-p^2+1}+\frac{p^3}{p^3-p^2+1}\left(\frac{p^2-1}{p^3}\right)^l\right) \\
    & =\frac{p^{3l}}{p^5}\left(\frac{p}{(p-1)(p^3-p^2+1)}-\frac{p}{p-1}\left(\frac{1}{p}\right)^l+\frac{p^3}{p^3-p^2+1}\left(\frac{p^2}{p^3-p^2+1}\right)^l\right) \\
    &= \frac{p^{3l}}{p^5}\left(\frac{p}{(p-1)(p^3-p^2+1)}\right)(1-(p^3-p^2+1)\left(\frac{1}{p}\right)^l+(p^3-p^2)\left(\frac{p-1}{p^3}\right)^l)\\
    &=q^3\left(\frac{1-(p^3-p^2+1)(\frac{1}{p})^l+(p^3-p^2)(\frac{p^2-1}{p^3})^l}{p^4(p-1)(p^3-p^2+1)}\right) \\
    & \geq q^3\left(\frac{1-.531}{p^4(p-1)(p^3-p^2+1)}\right)
\end{align*}
The last step comes from finding the lower bound on the numerator when $p\geq 3$ and $l\geq 2$. The rate of the code is $r=k/n$ where $n=q^3$. Therefore, we have that the rate when $p$ is an odd prime is bounded below by
$$\frac{.469}{p^4(p-1)(p^3-p^2+1)},$$
which completes the proof.
\end{proof}

\section{Conclusion}

This paper proved an extension of the main theorem in \cite{malmskog:article} by following a similar proof strategy to conclude that all Hermitian-lifted Codes have a rate bounded below by a positive constant, regardless of the value of $q$.

There are remaining unanswered questions regarding Hermitian-lifted codes and also lifted codes in general. In \cite{undergrad:future}, the authors improve the lower bound given in \cite{malmskog:article} and, in work completed after this work, improve upon Theorem \ref{main} by using a different proof strategy for counting good monomials. However, this work still does not find the exact dimension of the Hermitian-lifted code. Similar questions could also be studied for lifted codes on different types of curves, in the direction of \cite{matthews_murphy} and \cite{matthews2023curvelifted}.


\end{document}